\title{Fully Lattice-Linear Algorithms\footnote{\textbf{The experiments presented in this paper were supported through computational resources and services provided by the Institute for Cyber-Enabled Research, Michigan State University.
}}
\footnote{\textbf{
Excerpts of 
this paper
were published in a paper that
appeared in the 42nd International Symposium on Reliable Distributed Systems (SRDS 2023) \cite{Gupta2023}.
}}
\footnote{\textbf{This is an extended abstract of the SRDS 2023 paper. To get the arXiv version of the SRDS paper, please go to a previous version uploaded
on this arXiv repository (2210.03055v4, uploaded on 30 Jul 2023). This extended abstract is uploaded to this same arXiv repository due to the policies
of arXiv on submissions with shared content.}}
}
\author{Arya Tanmay Gupta\and Sandeep S Kulkarni}
\date{Computer Science and Engineering, Michigan State University\\
\texttt{\{atgupta,sandeep\}@msu.edu}}
\newcommand{\mds}{MDS\xspace}
\newcommand{\gc}{GC\xspace}
\newcommand{\mvc}{MVC\xspace}
\newcommand{\mis}{MIS\xspace}
\newcommand{\imped}{impedensable\xspace}
\newcommand{\Imped}{Impedensable\xspace}
\newcommand{\smp}{SMP\xspace}
\newcommand{\ds}{DS\xspace}
\newcommand{\is}{IS\xspace}
\newcommand{\vc}{VC\xspace}
\newtheorem{theorem}{Theorem}
\newtheorem{lemma}{Lemma}
\newtheorem{corollary}{Corollary}
\newtheorem{algorithm}{Algorithm}
\newtheorem{example}{Example}%
\newtheorem{examplesmpcont}{Example \smp continuation}
\newtheorem{exampledscont}{Example \mds continuation}
\newtheorem{definition}{Definition}%
\begin{document}

\maketitle

\begin{abstract}
    Lattice-linearity was introduced as a way to model problems using predicates that induce a lattice among the global states (Garg, SPAA 2020). A key property of \textit{the predicate} representing such problems is that it induces \textit{one} lattice in the state space. An algorithm that emerges from such a predicate guarantees the execution to be correct even if nodes execute asynchronously. However, many interesting problems do not exhibit lattice-linearity. This issue was somewhat alleviated with the introduction of eventually lattice-linear algorithms (Gupta and Kulkarni, SSS 2021). They induce \textit{single} or \textit{multiple} lattices in \textit{a subset of the state space} even when the problem cannot be defined by a predicate under which the global states form a lattice. 
    
    This paper focuses on analyzing and differentiating between lattice-linear problems and algorithms. We introduce \textit{fully lattice-linear algorithms}. These algorithms partition the \textit{entire} reachable state space into \textit{one or more lattices}, and as a result, ensure that the execution remains correct even if nodes execute asynchronously. For demonstration, we present lattice-linear self-stabilizing algorithms for minimal dominating set (MDS), graph colouring (GC), minimal vertex cover (MVC) and maximal independent set (MIS) problems.
    
    The algorithms for MDS, MVC and MIS converge in $n$ moves and the algorithm for GC converges in $n+2m$ moves. These algorithms preserve this time complexity while allowing the nodes to execute asynchronously. They present an improvement to the existing algorithms present in the literature.
    
    Our work also demonstrates that to allow asynchrony, a more relaxed data structure can be allowed (called $\prec$-lattice in this paper, where the meet of a pair of global states may not be defined), rather than a distributive lattice (where both join and meet for all pairs of global states are defined, and join and meet distribute over each other) as assumed by Garg. 
\end{abstract}

\textbf{\textit{Keywords}}: self-stabilization, lattice-linear problems, lattice-linear algorithms, asynchrony

\section{Introduction}\label{section:introduction}

Many concurrent
algorithms can be viewed as a loop where in each step/iteration, a node reads the information about other nodes, based on which it decides to take action and update its own state. As an example, in an algorithm for graph colouring, in each step, a node reads the colour of all its neighbours and, if necessary, updates its own colour. Execution of this algorithm in a parallel or distributed system requires synchronization to ensure correct behaviour. For example, if two nodes, say $i$ and $j$ change their colour simultaneously, the resulting action may be incorrect.
Mutual exclusion, transactions, dining philosopher and locking are examples of synchronization primitives. 

Let us consider that we allow such algorithms to execute in asynchrony: consider that $i$ and $j$ from the above example execute asynchronously. Let us suppose that $i$ reads the state of $j$ and changes its state. The action of $i$ is acceptable till now. 
However, if $j$ reads the value of $i$ concurrently, it is relying on inconsistent information about $i$. This is because $i$ will have changed its state in its immediate next step. In other words, $j$ is relying on old information about $i$. 
The synchronization primitives discussed in the previous paragraph aim to eliminate such behaviour. However, they introduce considerable overhead in terms of time and computational resources.


Generally, reading values in asynchrony, as described in the example above, causes the algorithm to fail. However, if the correctness of the algorithm can be proved even when a node executes based on old information, then such an algorithm can benefit from \textit{asynchronous} execution; such execution would not suffer from synchronization overheads and each node can execute independently. 


Garg \cite{Garg2020} introduced modelling problems such that the \textit{entire reachable state space forms a single lattice}. Such induction allows asynchronous executions.
%
We \cite{Gupta2021} introduced \textit{eventually lattice-linear algorithms} for problems that cannot be modelled under the constraints of \cite{Garg2020}. These algorithms induce (potentially multiple) lattices in a \textit{subset} of the state space.
(We investigate these models comprehensively in \Cref{section:background}.)

A key property of lattice-linearity is that a total order is induced among the local states visited by individual nodes. This ensures that the local states of a node do not form a cycle; and as a consequence, single or multiple lattices are induced among the global states.

In this paper, we introduce \textit{fully} lattice-linear algorithms that are capable of inducing single or multiple lattices in the \textit{entire reachable state space}. 
We show that with fully lattice-linear algorithms, it is possible to combine lattice-linearity with self-stabilization, which ensures that the system converges to a legitimate state even if it starts from an arbitrary state.


\subsection{Contributions of the paper}

\begin{itemize}
\item We alleviate the limitations of, and bridge the gap between, \cite{Garg2020} and \cite{Gupta2021} by introducing fully lattice-linear algorithms (FLLAs). The former creates a single lattice in the state space and does not allow self-stabilization whereas the latter creates multiple lattices in a subset of the state space. FLLAs induce one or more lattices
among all the reachable global states and can enable self-stabilization. This overcomes the limitations of \cite{Garg2020} and \cite{Gupta2021}.
\item We provide upper bounds to the convergence time for an arbitrary algorithm traversing a lattice of global states.
\item We present fully lattice-linear self-stabilizing algorithms for minimal dominating set (\mds) and graph colouring (\gc).
\item We show that a direct extension of the design of the algorithms for \mds and \gc to develop an algorithm for minimal vertex cover (\mvc) exhibits cyclic behaviour. However, we exploit the properties of \mvc and make changes to the design choices to obtain an algorithm for \mvc. From here, we straightforwardly obtain an algorithm for maximal independent set (\mis) as well.
\item The algorithms for \mds, \mvc, \mis converge in $n$ moves and the algorithm for \gc converges in $n+2m$ moves. These algorithms are fully tolerant to consistency violations and asynchrony. Thus, they are an improvement over the existing algorithms in the literature. 
\end{itemize}

The major focus and benefit of this paper is to study the theory behind the algorithms, for non-lattice-linear problems, that are tolerant to asynchronous executions and induce lattices in the state space. Some applications of the specific problems studied in this paper are listed as follows.
Dominating set is applied in communication and wireless networks where it is used to compute the virtual backbone of a network.
Graph colouring is applicable in (1) chemistry, where it is used to design storage of chemicals -- a pair of reacting chemicals are not stored together, (2) communication networks, where it is used in wireless networks to compute radio frequency assignment, (3) academics, where it is used to design exam timetables -- exams attended by a single student, for each student, are not conducted at the same time, and (4) economics, where it is used in team management -- people who do not want to work together are not put in the same team. 
Vertex cover is applicable in (1) computational biology, where it is used to eliminate repetitive DNA sequences -- providing a set covering all desired sequences, and (2) economics, where it is used in camera instalments -- it provides a set of locations covering all hallways of a building.
Independent set is applied in computational biology, where it is used in discovering stable genetic components for designing engineered genetic systems.

\subsection{Organization of the paper}
In \Cref{section:preliminaries}, we discuss the preliminaries that we utilize in this paper. In \Cref{section:background}, we discuss some background results related to lattice-linearity that are present in the literature. 
In \Cref{section:ds-andcolouring-algos}, we describe the general structure of a (fully) lattice-linear algorithm. In \Cref{section:ds-ll}, we present a fully lattice-linear algorithm for minimal dominating set. We present a fully lattice-linear algorithm for graph colouring in \Cref{section:gc-ll}. 

In \Cref{section:vc-no-ll}, we discuss how a similar design of algorithms cannot be used to develop all lattice-linear algorithms.
We discuss why the design used to develop algorithms for minimal dominating set and graph colouring cannot be extended to develop algorithms for minimal vertex cover in \Cref{subsection:vc-no-tie-breaker}. We present algorithms for minimal vertex cover and maximal independent set problems in \Cref{subsection:mvc-ll} and \Cref{subsection:mis-ll} respectively. We elaborate on the properties similar in the algorithms for minimal vertex cover and maximal independent set in \Cref{subsection:complex-actions-vc-is}.

In \Cref{section:convergence-time}, we provide an upper bound to the number of moves required, for convergence, by an algorithm that traverses a lattice of states.
We discuss related work in \Cref{section:literature}. Then, in \Cref{section:experiments}, we compare the convergence time of the algorithm presented in \Cref{section:ds-ll} with other algorithms (for the minimal dominating set problem) in the literature.
Finally, we conclude in \Cref{section:conclusion}. 

\section{Preliminaries}\label{section:preliminaries}

In this paper, we are mainly interested in graph algorithms where the input is a graph $G$, $V(G)$ is the set of its nodes and $E(G)$ is the set of its edges. For a node $i\in V(G)$, $Adj_i$ is the set of nodes connected to $i$ by an edge, and $Adj^x_i$ is the set of nodes within $x$ hops from $i$, excluding $i$. While writing the time complexity of the algorithms, we notate $n$ to be $|V(G)|$ and $m$ to be $|E(G)|$. For a node $i$, $deg(i)=|Adj_i|$, and $N_i=Adj_i\cup\{i\}$.

A \textit{global state} $s$ is represented as a vector such that $s[i]$ represents the \textit{local state} of node $i$. $s[i]$ denotes the variables of node $i$, and is 
represented in the form of a vector of the variables of node $i$.
We use $S$ to denote the \textit{state space}, the set of all global states that a system can be in. 

Each node in $V(G)$ is associated with rules. Each rule at node $i$ checks the values of variables of the nodes in $Adj_i^x\cup \{i\}$ (where the value of $x$ is problem dependent) and updates the variables of $i$. A \textit{rule} at a node $i$ is of the form $g \longrightarrow a_c$ where $g$, a \textit{guard}, is a Boolean expression over variables in $Adj_i^x\cup \{i\}$ and \textit{action} $a_c$ is a set of instructions that updates the variables of $i$ if $g$ is true. 
A node is \textit{enabled} iff at least one of its guards is true, otherwise it is \textit{disabled}.
Algorithms, in this paper, are written as a sequence of rules; the nodes execute the first rule whose guard is true.
A \textit{move} is an event in which an enabled node updates its variables.

An algorithm $A$ is \textit{self-stabilizing} with respect to the subset $S_o$ of $S$ iff it satisfies the following properties: (1) \textit{convergence}: starting from an  arbitrary state, any sequence of computations of $A$ reaches a state in $S_o$, and (2) \textit{closure}: any computation of $A$ starting from $S_o$ always stays in $S_o$. 
We assume $S_o$ to be the set of \textit{optimal} states: the system is deemed converged once it reaches a state in $S_o$. $A$ is a \textit{silent} self-stabilizing algorithm if no node is enabled once a state in $S_o$ is reached.

\subsection{Execution without synchronization }

Typically, we view the \textit{computation} of an algorithm as a sequence of global states $\langle s_0, s_1, \cdots\rangle$, where $s_{t+1}, t\geq 0,$ is obtained by executing some rule by one or more nodes (as decided by the scheduler) in $s_t$.  
For the sake of discussion, assume that only node $i$ executes in state $s_t$. 
The computation prefix uptil $s_{t}$ is $\langle s_0, s_1, \cdots, s_t\rangle$. The state that the system traverses to after $s_t$ is $s_{t+1}$.
Under proper synchronization, $i$ would evaluate its guards on the \textit{current} local states of its neighbours in $s_t$, and the resultant state $s_{t+1}$ can be computed accordingly.

To understand the execution in asynchrony, let $x(s)$ be the value of some variable $x$ 
in state $s$. 
If $i$ executes in asynchrony, then it views the global state that it is in to be $s'$, 
where $x(s')\in\{ x(s_0), x(s_1), \cdots, x(s_t) \}.$
In this case, $s_{t+1}$ is evaluated as follows.
If all guards in $i$ evaluate to false, then the system will continue to remain in state $s_t$, i.e., $s_{t+1} = s_{t}$.
If a guard $g$ evaluates to true then $i$ will execute its corresponding action $a_c$.
Here, we have the following observations:
(1) $s_{t+1}[i]$ is the state that $i$ obtains after executing an action in $s'$, and (2) $\forall j\neq i$, $s_{t+1}[j] = s_t[j]$.

The model described in the above paragraph is \textit{arbitrary asynchrony}, a model in which a node can read old values of other nodes arbitrarily, requiring that if some information is sent from a node, it eventually reaches the target node.
In this paper, however, we are interested in \textit{asynchrony with monotonous read} (AMR) model. 
AMR model is arbitrary asynchrony with an additional restriction: when node $i$ reads the state of node $j$, the reads are monotonic, i.e., if $i$ reads a newer value of the state of $j$ then it cannot read an older value of $j$ at a later time. For example, if the state of $j$ changes from $0$ to $1$ to $2$ and node $i$ reads the state of $j$ to be $1$ then its subsequent read will either return $1$ or $2$, it cannot return $0$. 

\subsection{Embedding a $\prec$-lattice in global states. }

In this part, we discuss the structure of a lattice in the state space which, under proper constraints, allows an algorithm to converge in asynchrony.
To describe the embedding,
we define a total order $\prec_l$; all local states of a node $i$ are totally ordered under $\prec_l$. 
Using $\prec_l$, we define a partial order $\prec_g$ among global states as follows.

We say that $s \prec_g s^\prime$ iff $(\forall i: s[i]=s'[i]\lor s[i]\prec_l s'[i]) \land (\exists i:s[i]\prec_ls'[i])$.
Also, $s=s'$ iff $\forall i: s[i] = s'[i]$. 
For brevity, we use $\prec$ to denote $\prec_l$ and $\prec_g$: $\prec$ corresponds to $\prec_l$ while comparing local states, and $\prec$ corresponds to $\prec_g$ while comparing global states. 
We also use the symbol `$\succ$' which is such that $s\succ s'$ iff $s' \prec s$.
Similarly, we use symbols `$\preceq$' and `$\succeq$'; e.g., $s\preceq s'$ iff  $s=s' \lor s \prec s'$.
We call the lattice, formed from such partial order, a \textit{$\prec$-lattice}.

\begin{definition}\label{definition:<-lattice}
    \textit{\boldmath$\prec$-\textbf{lattice}}. 
    Given a total relation $\prec_l$ that orders the values of $s[i]$ (the local state of node $i$ in state $s$), for each $i$, the $\prec$-lattice corresponding to $\prec_l$ is defined by the following partial order:
    $s \prec s'$ iff $(\forall i: s[i] \preceq_l s'[i]) \land (\exists i: s[i] \prec_l s'[i])$.
\end{definition}

A $\prec$-lattice constraints how global states can transition among one another: a global state $s$ can transition to state $s'$ iff $s\prec s'$.

In a lattice, we can define meet and join of any two states: the meet (respectively, join), of two states $s_1$ and $s_2$ is a state $s_3$ where $\forall i, s_3[i]$ is equal to $min(s_1[i], s_2[i])$ (respectively, $max(s_1[i], s_2[i])$), where $\min(x, y) = min(y, x)=x$ iff $(x\prec_l y \lor x=y)$, and $\max(x, y) = \max(y, x)=y$ iff $(y\succ_l x \lor y=x)$.
In a $\prec$-lattice, a join can be found for any pair of global states, however, a meet may not be found for some of the pairs, the examples of which we study, in this paper, in the following sections. This makes a $\prec$-lattice an incomplete lattice.

By varying $\prec_l$ that identifies a total order among the states of a node, one can obtain different lattices. A $\prec$-lattice, embedded in the state space, is useful for permitting the algorithm to execute asynchronously.
Under proper constraints on how the lattice is formed, convergence is ensured. 

\section{Background: Types of Lattice-Linear Systems}
\label{section:background}

Lattice-linearity has been shown to be induced in problems in two ways: one, where the problem naturally manifests lattice-linearity, and another, where the problem does not manifest lattice-linearity but the algorithm imposes it. We discuss these classes of problems in \Cref{subsection:ll-problems} and  \Cref{subsection:ll-algos} respectively.

\subsection{Natural Lattice-Linearity: Lattice-Linear \textsc{Problems}}
\label{subsection:ll-problems}

In this subsection, we discuss \textit{lattice-linear problems}, i.e., the problems where the description of the problem statement creates the lattice structure. Such problems can be represented by a predicate under which the states in $S$ form a lattice.
Lattice-linear problems have been discussed in \cite{Garg2020, Garg2021, Garg2022, Gupta2023a}. 

A \textit{lattice-linear problem} $P$ can be represented by a predicate $\mathcal{P}$ such that if any node $i$ is violating $\mathcal{P}$ in a state $s$, then it must change its state. Otherwise, the system will not satisfy $\mathcal{P}$.
Let $\mathcal{P}(s)$ be true iff the global state $s$ satisfies $\mathcal{P}$. A node violating $\mathcal{P}$ in $s$ is called an \textit{\imped} node (an \textit{impediment} to progress if does not execute, \textit{indispensable} to execute for progress).

\begin{definition}\label{definition:impedensable-node}\cite{Garg2020} \textbf{\Imped node.} $\textsc{\Imped}$ $(i$, $s$, $\mathcal{P})\equiv \lnot \mathcal{P}(s)$ $\land$ $(\forall s'\succ s:s'[i]=s[i]\Rightarrow\lnot \mathcal{P}(s'))$. \end{definition}

If a node $i$ is \imped in state $s$, then in any state $s'$ such that $s'\succ s$, if the state of $i$ remains the same, then the algorithm will not converge.
Thus, predicate $\mathcal{P}$ induces a total order among the local states visited by a node, for all nodes. Consequently, the discrete structure that gets induced among the global states is a $\prec$-lattice, as described in \Cref{definition:<-lattice}. 
We say that $\mathcal{P}$, satisfying \Cref{definition:impedensable-node}, is \textit{lattice-linear} with respect to that $\prec$-lattice.

There can be multiple arbitrary lattices that can be induced among the global states. A system cannot guarantee convergence while traversing an arbitrary lattice. To guarantee convergence, we design the predicate $\mathcal{P}$ such that it fulfils some properties, and guarantees reachability to an optimal state. $\mathcal{P}$ is used by the nodes to determine if they are \imped, using \Cref{definition:impedensable-node}.
Thus, in any suboptimal global state, there will be at least one \imped node. Formally,

\begin{definition}\cite{Garg2020}\textbf{Lattice-linear predicate (LLP).}
    $\mathcal{P}$ is an LLP with respect to a $\prec$-lattice induced among the global states iff $\forall s\in S: \lnot\mathcal{P}(s) \Rightarrow \exists i:\textsc{\Imped}(i,s,\mathcal{P})$.
\end{definition}

Now we complete the definition of lattice-linear problems. In a lattice-linear problem $P$, given any suboptimal global state $s$, $P$ specifies all and the only nodes which cannot retain their local states. 
The predicate $\mathcal{P}$ is thus designed conserving this nature of the subject problem $P$.

\begin{definition}\label{definition:ll-problem}
    \textbf{Lattice-linear problems}.
    Problem $P$ is lattice-linear 
    iff there exists a predicate $\mathcal{P}$ and a $\prec$-lattice such that
    
    \begin{itemize}
        \item $P$ is deemed solved iff the system reaches a state where $\mathcal{P}$ is true,
        \item $\mathcal{P}$ is lattice-linear with respect to the $\prec$-lattice induced among the states in $S$, i.e., $\forall s: \neg \mathcal{P}(s) \Rightarrow \exists i:\textsc{\Imped}(i,s,\mathcal{P})$, and
        \item $\forall s:(\forall i:\textsc{\Imped}(i,s,\mathcal{P})\Rightarrow (\forall s':\mathcal{P}(s')\Rightarrow s'[i]\neq s[i]))$.
    \end{itemize}
\end{definition}

\noindent\textit{Remark}: A $\prec$-lattice, induced under $\mathcal{P}$, allows asynchrony because if a node, reading old values, reads the current state $s$ as $s'$, then $s'\prec s$. So $\lnot\mathcal{P}(s')\Rightarrow \lnot\mathcal{P}(s)$ because $\textsc{\Imped}(i,s',\mathcal{P})$ and $s'[i]=s[i]$.

\begin{definition}\label{definition:ssll-problem}
    \textbf{Self-stabilizing lattice-linear predicate}.
    Continuing from \Cref{definition:ll-problem},
    $\mathcal{P}$ is a self-stabilizing lattice-linear predicate if and only if the supremum of the lattice, that $\mathcal{P}$ induces, is an optimal state.
\end{definition}

\noindent Note that a self-stabilizing lattice-linear predicate $\mathcal{P}$ can also be true in states other than the supremum of the $\prec$-lattice. 

\begin{example}\label{example:mom-definition}\textbf{SMP}.
    We describe a lattice-linear problem, the \textit{stable (man-optimal) marriage problem} (\smp) from \cite{Garg2020}. In \smp, all men (respectively, women) rank women (respectively men) in terms of their preference (lower rank is preferred more). A man proposes to one woman at a time based on his preference list, and the proposal may be accepted or rejected.
    
    A global state is represented as a vector $s$ where the vector $s[i]$ contains a scalar that represents the rank of the woman, according to the preference of man $i$, whom $i$ proposes.
    
    SMP can be defined by the predicate $\mathcal{P}_{\smp}\equiv \forall m,m':m\neq m'\Rightarrow s[m]\neq s[m']$. $\mathcal{P}_{\smp}$ is true iff no two men are proposing to the same woman. A man $m$ is \imped iff there exists another man $m'$, such that $m$ and $m'$ are proposing to the same woman $w$, and $w$ prefers $m'$ over $m$. Thus, 
    $\textsc{\Imped-\smp}(m,s,\mathcal{P}_{\smp})$ $\equiv$ $\exists m'$ $:$ $s[m]=s[m']\land rank(s[m],m')< rank(s[m],m).$
    If $m$ is \imped, he increments $s[m]$ by 1 until all his choices are exhausted. Following this algorithm, an optimal state, i.e., a state where the sum of regret of men is minimized, is reached.
    \qed 
\end{example}

A key observation from the stable marriage problem (\smp) and other problems from \cite{Garg2020} is that 
the states in $S$ form \textit{one} lattice, which contains a global infimum $\ell$ and \textit{possibly} a global supremum $u$ i.e., $\ell$  and $u$ are the states such that $\forall s\in S, \ell\preceq s$ and $u \succeq s$.

\begin{examplesmpcont}
    As an illustration of \smp, consider the case where we have 3 men and 3 women. 
    The lattice induced in this case is shown in \Cref{figure:smplattice}. In this figure, every vector represents the global state $s$ such that $s[i]$ represents the rank of the woman, according to the preference of man $i$, whom $i$ proposes. 
    The algorithm begins in the state $\langle 1, 1, 1\rangle$ (i.e., each man starts with his first choice) and continues its execution in this lattice. 
    The algorithm terminates in the lowest state in the lattice where no node is \imped.
    \qed
    
\end{examplesmpcont}

\begin{figure}[ht]
    \centering 
    \begin{tikzpicture}[scale=.8,every node/.style={scale=.8}]
        \node at (0,0) (a) {$\langle$1,1,1$\rangle$};
        
        \node at (-3,1) (b) {$\langle$2,1,1$\rangle$};
        \node at (0,1) (c) {$\langle$1,2,1$\rangle$};
        \node at (3,1) (d) {$\langle$1,1,2$\rangle$};
        
        \node at (4,2) (e) {$\langle$1,1,3$\rangle$};
        \node at (2,2) (f) {$\langle$1,2,2$\rangle$};
        \node at (4,3) (g) {$\langle$1,2,3$\rangle$};
        
        \node at (0,2) (h) {$\langle$1,3,1$\rangle$};
        \node at (2,3) (i) {$\langle$1,3,2$\rangle$};
        \node at (4,4) (j) {$\langle$1,3,3$\rangle$};
        
        \node at (-4,2) (k) {$\langle$2,1,2$\rangle$};
        \node at (-2,4) (l) {$\langle$2,1,3$\rangle$};
        
        \node at (-2,2) (m) {$\langle$2,2,1$\rangle$};
        \node at (0,3) (n) {$\langle$2,2,2$\rangle$};
        \node at (2,5) (o) {$\langle$2,2,3$\rangle$};
        
        \node at (-2,3) (p) {$\langle$2,3,1$\rangle$};
        \node at (0,5) (q) {$\langle$2,3,2$\rangle$};
        \node at (3,6) (r) {$\langle$2,3,3$\rangle$};
        
        \node at (-4,3) (s) {$\langle$3,1,1$\rangle$};
        \node at (2,4) (t) {$\langle$3,1,2$\rangle$};
        \node at (-2,5) (u) {$\langle$3,1,3$\rangle$};
        
        \node at (-4,4) (v) {$\langle$3,2,1$\rangle$};
        \node at (4,5) (w) {$\langle$3,2,2$\rangle$};
        \node at (0,6) (x) {$\langle$3,2,3$\rangle$};
        
        \node at (-4,5) (y) {$\langle$3,3,1$\rangle$};
        \node at (-3,6) (z) {$\langle$3,3,2$\rangle$};
        \node at (0,7) (parent) {$\langle$3,3,3$\rangle$};
        
        \draw (a) -- (b);
        \draw (a) -- (c);
        \draw (a) -- (d);
        \draw (d) -- (e);
        \draw (d) -- (f);
        \draw (c) -- (f);
        \draw (e) -- (g);
        \draw (f) -- (g);
        \draw (c) -- (h);
        \draw (f) -- (i);
        \draw (h) -- (i);
        \draw (g) -- (j);
        \draw (i) -- (j);
        \draw (b) -- (k);
        \draw (d) -- (k);
        \draw (e) -- (l);
        \draw (k) -- (l);
        \draw (b) -- (m);
        \draw (c) -- (m);
        \draw (m) -- (n);
        \draw (n) -- (o);
        \draw (g) -- (o);
        \draw (l) -- (o);
        \draw (f) -- (n);
        \draw (k) -- (n);
        \draw (h) -- (p);
        \draw (m) -- (p);
        \draw (p) -- (q);
        \draw (i) -- (q);
        \draw (n) -- (q);
        \draw (o) -- (r);
        \draw (j) -- (r);
        \draw (q) -- (r);
        \draw (b) -- (s);
        \draw (k) -- (t);
        \draw (s) -- (t);
        \draw (t) -- (u);
        \draw (l) -- (u);
        \draw (s) -- (v);
        \draw (m) -- (v);
        \draw (n) -- (w);
        \draw (t) -- (w);
        \draw (v) -- (w);
        \draw (w) -- (x);
        \draw (u) -- (x);
        \draw (o) -- (x);
        \draw (p) -- (y);
        \draw (v) -- (y);
        \draw (y) -- (z);
        \draw (w) -- (z);
        \draw (q) -- (z);
        \draw (z) -- (parent);
        \draw (r) -- (parent);
        \draw (x) -- (parent);
    \end{tikzpicture}
    \caption{Lattice for \smp with 3 men and 3 women; $\ell=(1,1,1)$ and $u=(3,3,3)$. Transitive edges are not shown for brevity.}
    \label{figure:smplattice}
\end{figure}

In \smp and other problems in \cite{Garg2020}, the algorithm needs to be initialized to $\ell$ to reach an optimal solution.
If we start from a state $s$ such that $s \neq \ell$, then the algorithm can only traverse the lattice from $s$. Hence, upon termination, it is possible that the optimal solution is not reached. 
In other words, such algorithms cannot be self-stabilizing unless $u$ is optimal. 

\begin{examplesmpcont}
    Consider that men and women are $M=(A,J,T)$ and $W=(K,Z,M)$. Let that proposal preferences of men are $A=(Z,K,M)$, $J=(Z,K,M)$ and $T=(K,M,Z)$, and women have ranked men as $Z=(A,J,T)$, $K=(J,T,A)$ and $M=(T,J,A)$. The optimal state (starting from $\langle 1,1,1\rangle$) is $\langle 1,2,2\rangle$.
    Starting from $\langle 1,2,3\rangle$, the algorithm terminates at $\langle 1,2,3\rangle$ which is not optimal.
    Starting from $\langle 3,1,2\rangle$, the algorithm terminates declaring that no solution is available.
    \qed 
\end{examplesmpcont}

\subsection{Imposed Lattice-Linearity: Eventually Lattice-Linear \textsc{Algorithms}}\label{subsection:ll-algos}

Unlike the lattice-linear problems where the problem description creates a lattice among the states in $S$, there are problems where the states do not form a lattice naturally, i.e., in those problems, given a suboptimal global state, the problem does not specify a specific set of nodes to change their state. As a result, in such problems, there are instances in which the \imped nodes cannot be determined naturally, i.e., in those instances
$\exists s :\lnot\mathcal{P}(s) \wedge   (\forall i : \exists s' : \mathcal{P}(s')\land s[i]=s'[i]$). 

However, lattices can be induced in the state space algorithmically in these cases. In \cite{Gupta2021}, we presented algorithms for some of such problems.
Specifically, the algorithms presented in \cite{Gupta2021} partition the state space into two parts: feasible and infeasible states, and induce multiple lattices among the feasible states. These algorithms work in two phases. The first phase takes the system from an infeasible state to a feasible state (where the system starts to exhibit the desired property), which is an element of a lattice. In the second phase, only an \imped node can change its state. This phase takes the system from a feasible state to an optimal state. These algorithms converge starting from an arbitrary state; they are called \textit{eventually lattice-linear self-stabilizing algorithms.}

\begin{example}\textbf{\mds}.\label{example:dominating-set-definition}
    In the minimal dominating set problem, the task is to choose a minimal set of nodes $\mathcal{D}$ in a given graph $G$ such that for every node in $V(G)$, either it is in $\mathcal{D}$, or at least one of its neighbours is in $\mathcal{D}$. Each node $i$ stores a variable $i[st]$ with domain $\{IN, OUT\}$; $i\in\mathcal{D}$ iff $i[st]=IN$.
    \qed
\end{example}
    
\noindent\textit{Remark}: The minimal dominating set (MDS) problem is not a lattice-linear problem. This is because, for any given node $i$, an optimal state can be reached if $i$ does or does not change its state. Thus $i$ cannot be deemed as \imped or not \imped under the natural constraints of \mds.
    
\begin{exampledscont}
    Even though \mds is not a lattice-linear problem, lattice-linearity can be imposed on it algorithmically. \Cref{algorithm:ds-ellss} (present in the following) is based on the algorithm in \cite{Gupta2021} for a more generalized version of the problem, the service demand based minimal dominating set problem. \Cref{algorithm:ds-ellss} consists of two phases. In the first phase, if node $i$ is \textit{addable}, i.e., if $i$ and all its neighbours are not in dominating set (\ds) $\mathcal{D}$, then $i$ enters $\mathcal{D}$. 
    This phase does not satisfy the constraints of lattice-linearity from \Cref{definition:impedensable-node}.
    However, once the algorithm reaches a state where nodes in $\mathcal{D}$ form a (possibly non-minimal) \ds, phase 2 imposes lattice-linearity. Specifically, in phase 2, a node $i$ leaves $\mathcal{D}$ iff it is \textit{\imped}, i.e., $i$ along with all neighbours of $i$ stay dominated even if $i$ moves out of $\mathcal{D}$, and $i$ is of the highest ID among all the removable nodes within its distance-2 neighbourhood. 
    \qed 
    
    \begin{algorithm}\label{algorithm:ds-ellss} Eventually lattice-linear algorithm for \mds.
        \begin{center}
            \begin{tabular}{|l|}
                \hline
                \textsc{Addable-DS}$(i)$ $\equiv i[st]=OUT\land$\\ \quad\quad\quad\quad $(\forall j\in Adj_i:j[st]=OUT)$.\\
                \textsc{Removable-DS}$(i)$ $\equiv i[st]=IN\land (\forall j\in Adj_i\cup\{i\}:$\\ \quad\quad\quad\quad $((j\neq i\land j[st]=IN)\lor$\\\quad\quad\quad\quad $(\exists k\in Adj_j, k\neq i: k[st]=IN)))$.\\
                \scriptsize{
                // Node $i$ can be removed without violating dominating set
                }\\
                \textsc{\Imped-DS}$(i) \equiv \textsc{Removable-DS}(i) \land$\\ \quad\quad\quad\quad $(\forall j\in Adj^2_i:\lnot \textsc{Removable-DS}(j) \lor$\\ \quad\quad\quad\quad $i[id]>j[id])$.\\~\\
                Rules for node $i$:\\
                \textsc{Addable-DS}$(i)$ $\longrightarrow i[st]=IN$. (phase 1)\\
                \textsc{\Imped-DS}$(i)$ $\longrightarrow i[st] = OUT$. (phase 2)\\
                \hline
            \end{tabular}
        \end{center}
    \end{algorithm}
\end{exampledscont}

\begin{exampledscont}\label{example:4-nodes}
    To illustrate the lattice imposed by phase 2 of \Cref{algorithm:ds-ellss}, consider an example graph $G_4$ with four nodes such that they form two disjoint edges, i.e., $V(G)=\{v_1,v_2,v_3,v_4\}$ and $E(G)=\{\{v_1,v_2\},\{v_3,v_4\}\}$. Assume that $G$ is initialized in a feasible state.
    
    The lattices formed in this case are shown in \Cref{figure:half-lattices-from-ds-example}. We write a state $s$ of this graph as $\langle v_1[st], v_2[st], v_3[st], v_4[st]\rangle$. We assume that $v_i[id]>v_j[id]$ iff $i>j$.
    Due to phase 1, the nodes not being dominated move in the DS, which makes the system traverse to a feasible state. Now, due to phase 2, only \imped nodes move out, thus, lattices are induced among the feasible global states as shown in the figure.
    \qed
\end{exampledscont}
\begin{figure}[ht]
    \centering
    \subfigure[]{
        \begin{tikzpicture}[scale=.7,every node/.style={scale=.7}]
            \node at (0,0) (a1) {$\langle$IN,OUT,IN,OUT$\rangle$};
            \node at (-1.5,-1) (a2) {$\langle$IN,OUT,IN,IN$\rangle$};
            \node at (1.5,-1) (a3) {$\langle$IN,IN,IN,OUT$\rangle$};
            \node at (0,-2) (a4) {$\langle$IN,IN,IN,IN$\rangle$};
            \draw (a1) -- (a2);
            \draw (a1) -- (a3);
            \draw (a2) -- (a4);
            \draw (a3) -- (a4);
        \end{tikzpicture}
    }\quad\quad 
    \subfigure[]{
        \begin{tikzpicture}[scale=.7,every node/.style={scale=.7}]
            \node at (0,0) (a1) {$\langle$OUT,IN,OUT,IN$\rangle$};
            \node at (0,-1) (a2) {//only 1 state};
        \end{tikzpicture}
    }\\
    \subfigure[]{
        \begin{tikzpicture}[scale=.7,every node/.style={scale=.7}]
            \node at (0,0) (a1) {$\langle$OUT,IN,IN,OUT$\rangle$};
            \node at (0,-1) (a2) {$\langle$OUT,IN,IN,IN$\rangle$};
            \draw (a1) -- (a2);
        \end{tikzpicture}
    }\quad\quad 
    \subfigure[]{
        \begin{tikzpicture}[scale=.7,every node/.style={scale=.7}]
            \node at (0,0) (a1) {$\langle$IN,OUT,OUT,IN$\rangle$};
            \node at (0,-1) (a2) {$\langle$IN,IN,OUT,IN$\rangle$};
            \draw (a1) -- (a2);
        \end{tikzpicture}
    }
    \caption{The lattices induced in the problem instance in Example MDS continuation \ref{example:4-nodes}. Transitive edges are not shown for brevity.}
    \label{figure:half-lattices-from-ds-example}
\end{figure}

In \Cref{algorithm:ds-ellss}, lattices are induced among only some of the global states. After the execution of phase 1, the algorithm locks into one of these lattices.
Thereafter in phase 2, the algorithm executes lattice-linearly to reach the supremum of that lattice. Since the supremum of every lattice represents an \mds, this algorithm always converges to an optimal state. 

\section{Introducing Fully Lattice-Linear Algorithms: Overcoming Limitations of \cite{Garg2020} and \cite{Gupta2021}}

\label{section:ds-andcolouring-algos}

In this section, we introduce fully lattice-linear algorithms that induce a lattice structure among all reachable states. While defining these algorithms, we also distinguish them from the closely related work in \cite{Garg2020} and \cite{Gupta2021}. 
Specifically, we discuss why developing algorithms for non-lattice-linear problems (such that the algorithms are lattice-linear, i.e., they induce lattices in the reachable state space) requires the innovation presented in this paper.

In \cite{Garg2020}, authors consider lattice-linear problems. Here, the state space is induced under a predicate and forms one lattice. Such problems, for a given suboptimal state, specify a set of nodes that must change their state, in order for the system to reach an optimal state. Such problems possess only one optimal state, and hence a violating node must change its state. The acting algorithm simply follows that lattice to reach the optimal state. 

Certain problems, e.g., dominating set are not lattice-linear 
(cf. the remark below \Cref{example:dominating-set-definition}) and thus they cannot be modelled under the constraints of \cite{Garg2020}. That is, the problem cannot specify for an arbitrary suboptimal state, a specific set of nodes that must change their state.
Such problems are studied in \cite{Gupta2021}. An interesting observation on the algorithms studied in \cite{Gupta2021} is that they induce multiple lattices in a subset of the state space (c.f. \Cref{figure:half-lattices-from-ds-example}). 

\textit{Limitations of \cite{Garg2020}:}
From the above discussion, we note that the general approach presented in \cite{Gupta2021} is applicable to a wider class of problems. Additionally, many lattice-linear problems do not allow self-stabilization. In such cases, e.g., in SMP, if the algorithm starts in, e.g., the supremum of the lattice, then it may terminate declaring that no solution is available. Unless the supremum is the optimal state, the acting algorithm cannot be self-stabilizing.

\textit{Limitations of \cite{Gupta2021}:}
In eventually lattice-linear algorithms (e.g., \Cref{algorithm:ds-ellss} for \mds), the lattice structure is imposed only on a subset of states. Thus, by design, the algorithm has a set of rules, say $\mathcal{A}_1$, that operate in the part of the state space where the lattice structure does not exist, and another set of rules, say $\mathcal{A}_2$, that operate in the part of the state space where the lattice is induced. 
Since actions of $\mathcal{A}_1$ operate outside the lattice structure, a developer must guarantee that if the system is initialized outside the lattice structure, then $\mathcal{A}_1$ converges the system to one of the states participating in the lattice (from where $\mathcal{A}_2$ will be responsible for the traversal of the system through the lattice) and thus the developer faces an extra proof obligation. 
In addition, it also must be proven that 
actions of $\mathcal{A}_2$ and the actions of $\mathcal{A}_1$ do not interfere with each other. E.g., the developer (in the context of \Cref{algorithm:ds-ellss}) has to make sure that actions of $\mathcal{A}_2$ do not perturb the node to a state where the selected nodes do not form a dominating set.

\textit{Alleviating the limitations of \cite{Garg2020} and \cite{Gupta2021}:} 
In this paper, we investigate if we can benefit from the advantages of both \cite{Garg2020} and \cite{Gupta2021}. We study if there exist fully lattice-linear algorithms where lattices can be imposed on all reachable states, forming single or multiple lattices. In the case that there are multiple optimal states and the problem requires self-stabilization, it would be necessary that multiple disjoint lattices are formed where the supremum of each lattice is an optimal state.
Self-stabilization also requires that these lattices are exhaustive, i.e., they collectively contain all states in the state space. 

Incorporating the property of self-stabilization ensures that the system can be allowed to initialize in any state, and asynchrony can be permitted. The initial state locks into one of the lattices, and due to the induction of $\prec$-lattices, such algorithms ensure a deterministic output (all local states visited by individual nodes form a total order, so an \imped node has only one choice of action, and thus, the global state of convergence can be predicted deterministically from the initial state or any intermediate state; the following sections contain examples of such algorithms).
Such algorithms would also permit multiple optimal states.
In addition, there will be no need to deal with interference between actions. 

\begin{definition}\label{definition:ll-algos}\textbf{Lattice-linear algorithms (LLA)}.
    Algorithm $A$ is an LLA for a problem $P$, iff there exists a predicate $\mathcal{P}$ and $A$ induces a $\prec$-lattice among the states of $S_1, ..., S_w \subseteq S (w\geq 1)$, such that
    \begin{itemize}
        \item State space $S$ of $P$ contains mutually disjoint lattices, i.e.
        \begin{itemize}
            \item $S_1, S_2, \cdots, S_w\subseteq S$ are pairwise disjoint.
            \item $S_1 \cup \cdots \cup S_w$ contains all the reachable states (starting from a set of initial states, if specified; if an arbitrary state can be an initial state, then $S_1 \cup \cdots \cup S_w=S$).
        \end{itemize}
        \item Lattice-linearity is satisfied in each subset under $\mathcal{P}$, i.e., 
        \begin{itemize}
            \item $P$ is deemed solved iff the system reaches a state where $\mathcal{P}$ is true
            \item $\forall k: 1 \leq k \leq w$, 
            $\mathcal{P}$ is lattice-linear with respect to the $\prec$-lattice induced in $S_k$ by $A$, i.e., $\forall s\in S_k: \lnot\mathcal{P}(s) \Rightarrow \exists i:
            \textsc{\Imped}(i,s,\mathcal{P})$.
        \end{itemize}
    \end{itemize}
\end{definition}

\noindent\textit{Remark}: Any algorithm that traverses a $\prec$-lattice of global states is a lattice-linear algorithm. An algorithm that solves a lattice-linear problem, under the constraints of lattice-linearity, e.g. the algorithm described in \Cref{example:mom-definition}, is also a lattice-linear algorithm.

\begin{definition}\textit{Self-stabilizing LLA}.
    Continuing from \Cref{definition:ll-algos}, $A$ is self-stabilizing only if $S_1 \cup S_2 \cup \cdots \cup S_w=S$ and $\forall k:1\leq k\leq w$, the supremum of the lattice induced among the states in $S_k$ is optimal.
\end{definition}

\section{Fully Lattice-Linear Algorithm for Minimal Dominating Set (\mds)}\label{section:ds-ll}

In this section, we present a lattice-linear self-stabilizing algorithm for \mds. \mds has been defined in \Cref{example:dominating-set-definition}.

We describe the algorithm as \Cref{algorithm:ds-ll}.
The first two macros are the same as \Cref{example:dominating-set-definition}. The definition of a node being \imped is changed to make the algorithm fully lattice-linear. Specifically, even allowing a node to enter into the dominating set (DS) is restricted such that only the nodes with the highest ID in their distance-2 neighbourhood can enter the \ds. 
Any node $i$ which is addable or removable will toggle its state iff it is \imped, i.e., iff any other node $j\in Adj^2_i:j[id]>i[id]$ is neither addable nor removable. In the case that $i$ is \imped, if $i$ is addable, then we call it \textit{addable-\imped}, otherwise, if it is removable, then we call it \textit{removable-\imped}.

\begin{algorithm}\label{algorithm:ds-ll}Algorithm for \mds.
    \begin{center}
    \begin{tabular}{|l|}
        \hline 
        \textsc{Removable-DS}$(i)$ $\equiv i[st]=IN\land (\forall j\in Adj_i\cup\{i\}:$\\ \quad\quad $((j\neq i\land j[st]=IN)\lor$\\\quad\quad $(\exists k\in Adj_j, k\neq i: k[st]=IN)))$.\\
        \textsc{Addable-DS}$(i)$ $\equiv i[st]=OUT\land$\\ \quad\quad $(\forall j\in Adj_i:j[st]=OUT)$.\\
        \textsc{Unsatisfied-DS}$(i)$ $\equiv \textsc{Removable-DS}(i)\lor$\\ \quad\quad $\textsc{Addable-DS}(i)$.\\
        $\textsc{\Imped-II-DS}(i)\equiv\textsc{Unsatisfied-DS}(i)\land$\\ \quad\quad $(\forall j\in Adj^2_i:\lnot\textsc{Unsatisfied-DS}(j)\lor$ $ i[id]>j[id])$.\\~\\
        Rules for node $i$.\\
        $\textsc{\Imped-II-DS}(i) \longrightarrow i[st] = \lnot i[st]$.\\
        \hline 
    \end{tabular}
    \end{center}
\end{algorithm}

\begin{lemma}\label{lemma:ds-no-step-back}
    Any node in an input graph does not revisit its older state while executing under \Cref{algorithm:ds-ll}.
\end{lemma}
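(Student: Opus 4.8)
The plan is to reduce the statement to the claim that \emph{every node changes its state at most once}. Since the only mutable variable of a node $i$ is $i[st]$, whose domain $\{IN,OUT\}$ has two elements, $i$ revisits an older local state exactly when it toggles twice and thereby returns to its starting value; so it suffices to rule out a second move. The engine of the whole argument is an \emph{independence} property coming from the $Adj_i^2$ clause of \textsc{Impedensable-II-DS}: no two nodes within distance $2$ of each other can be impedensable in the same state (if both were, each being \textsc{Unsatisfied-DS}, the highest-ID clause would force both $i[id]>j[id]$ and $j[id]>i[id]$, a contradiction). Hence the nodes that move in any one step are pairwise at distance $\geq 3$; in particular, at most one node of $Adj_i\cup\{i\}$ moves in a step, and (since any two neighbours of $i$ are joined by a path of length $2$ through $i$) at most one \emph{neighbour} of $i$ moves in a step.

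Next I would split on the node's \emph{first} move. Because \textsc{Addable-DS} requires $i[st]=OUT$ and \textsc{Removable-DS} requires $i[st]=IN$, that first move is addable (if $i$ starts $OUT$) or removable (if $i$ starts $IN$), and I show each precludes any later move. For the addable case ($OUT\to IN$): when $i$ makes the move all its neighbours are $OUT$ by the guard, and by independence none of them flips in the same step, so immediately afterwards $i$ is $IN$ and every neighbour is $OUT$. I would then show this configuration persists by induction over the following steps: $i$ is not addable ($i[st]=IN$) and not removable (a removable node needs an $IN$ neighbour, but all of $i$'s neighbours are $OUT$), so $i$ never moves again; and a neighbour $j$ cannot become addable, since that needs all of $j$'s neighbours — including $i$ — to be $OUT$, whereas $i$ stays $IN$. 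Thus no neighbour ever enters, the invariant is maintained, and $i$ is frozen at $IN$.

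The removable case ($IN\to OUT$) is where the real work lies: I must show $i$ never becomes addable again, i.e.\ that $i$ keeps at least one $IN$ neighbour for as long as it stays $OUT$. Right after the move this holds, because $\textsc{Removable-DS}(i)$ instantiated at $j=i$ supplies a neighbour $k$ with $k[st]=IN$. For the inductive step I consider the first putative step in which $i$'s number of $IN$ neighbours drops to $0$; by the independence property at most one neighbour of $i$ moves per step, so this step is the departure of a \emph{single} node $k^\ast$ that is $i$'s last remaining $IN$ neighbour. But evaluating the universally quantified condition of $\textsc{Removable-DS}(k^\ast)$ at $j=i$ requires either $i[st]=IN$ (false) or an $IN$ neighbour of $i$ other than $k^\ast$ (none, since $k^\ast$ was the last one); hence $k^\ast$ is not removable, contradicting that it moves. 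Therefore $i$ always retains an $IN$ neighbour, $\textsc{Addable-DS}(i)$ never holds, and $i$ does not move again. Combining the two cases, each node makes at most one move, which is the desired conclusion.

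The main obstacle I anticipate is precisely this removable case, and it is where the distance-$2$ highest-ID restriction is indispensable: without it, two neighbours of a common node could each read the other as $IN$ and leave together, undominating that node and enabling a step-back (indeed a two-step oscillation on a single edge). The distance-$2$ clause is exactly what forbids two neighbours of $i$ from moving in the same step, making the ``last one out'' argument valid. The remaining delicate point is asynchrony: under the AMR model a node acts on a possibly stale view, so I would additionally invoke the monotone-read property to argue that once $i$ has read its protecting neighbour $k$ as $IN$ it can never subsequently read it as $OUT$, so a stale-read execution cannot fabricate an $\textsc{Addable-DS}(i)$ guard that the synchronous argument has already excluded.
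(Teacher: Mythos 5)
Your proof is correct and takes essentially the same approach as the paper's: a case split on whether a node's first move is addable or removable, the distance-2 highest-ID clause to exclude simultaneous moves within a 2-neighbourhood, and the observation that a removable neighbour can leave only if $i$ stays dominated (your ``last one out'' contradiction is precisely the paper's third case, where a removable $j\in Adj_i$ moves out). Your version is just a more explicit, invariant-and-induction rendering of the paper's argument.
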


\begin{proof}
    Let $s$ be the global state at time $t$ while \Cref{algorithm:ds-ll} is executing. We have from \Cref{algorithm:ds-ll} that if a node $i$ is addable-\imped or removable-\imped, then no other node in $Adj^2_i$ changes its state.
    
    If $i$ is addable-\imped at $t$, then any node in $Adj_i$ is out of the \ds. After when $i$ moves in, then any other node in $Adj_i$ is no longer addable, so they do not move in after $t$. As a result, $i$ does not have to move out after moving in.
    
    If otherwise $i$ is removable-\imped at $t$, then all the nodes in $Adj_i\cup\{i\}$ are being dominated by some node other than $i$. So after when $i$ moves out, then none of the nodes in $Adj_i$, including $i$, becomes unsatisfied. 
    
    Let that $i$ is dominated and out, and some $j\in Adj_i$ is removable \imped. $j$ will change its state to $OUT$ only if $i$ is being covered by another node. Also, while $j$ turns out of the \ds, no other node in $Adj^2_j$, and consequently in $Adj_i$, changes its state. As a result, $i$ does not have to turn itself in because of the action of $j$.
    
    From the above cases, we have that $i$ does not change its state to $i[st]$ after changing its state from $i[st]$ to $i[st']$.
    throughout the execution of \Cref{algorithm:ds-ll}.
\end{proof}

To demonstrate that \Cref{algorithm:ds-ll} is lattice-linear, we define state value and rank, the auxiliary variables associated with nodes and global states, as follows:
$$
    \begin{array}{l}
        \textsc{State-Value-DS}(i,s)=\\
        \begin{cases}
            1 & \text{if $\textsc{Unsatisfied-DS}(i)$ in state $s$} \\
            0 & \text{otherwise}
        \end{cases}
    \end{array}
$$
$$
    \textsc{Rank-DS}(s)=\sum\limits_{i\in V(G)}\textsc{State-Value-DS}(i,s).
$$

\begin{theorem}\label{theorem:ds-ll}
    \Cref{algorithm:ds-ll} is a silent self-stabilizing and lattice-linear algorithm executed by $n$ nodes running asynchronously.
\end{theorem}

\begin{proof}
    We have from the proof of \Cref{lemma:ds-no-step-back} that if $G$ is in state 
    $s$ and $\textsc{Rank-DS}(s)$ is non-zero, then at least one node will be \imped, e.g., the unsatisfied node in $V(G)$ with the highest ID.
    For any node $i$, we have that $\textsc{State-Value-DS}(i)$ decreases whenever $i$ is \imped and never increases.
    As a result, $\textsc{Rank-DS}$ monotonously decreases throughout the execution of the algorithm until it becomes zero. This shows that \Cref{algorithm:ds-ll} is self-stabilizing. Once \textsc{Rank-DS} is zero, no node is \imped, so no node makes a move. This shows that \Cref{algorithm:ds-ll} is silent.
    
    Next, we show that \Cref{algorithm:ds-ll} is fully lattice-linear. We claim that there is one lattice corresponding to each optimal state. It follows that if there are $w$ optimal states for a given instance, then there are $w$ disjoint lattices $S_1, S_2, \cdots, S_w$ formed in the state space $S$. We show this as follows.
    
    We observe that an optimal state (manifesting a minimal dominating set) is at the supremum of its respective lattice, as there are no outgoing transitions from an optimal state. 
    
    Furthermore, given a state $s$, we can uniquely determine the optimal state that would be reached from $s$.  
    This is because in any given non-optimal state, the \imped nodes, that must change their state in order to reduce the ranks of the global state of the system, can be uniquely identified.  Additionally, the value, that these \imped nodes will update their local state to, is also unique. Thus, the optimal state reached from a given state $s$ can be uniquely identified. 
    
    This implies that starting from a state $s$ in $S_k (1\leq k\leq w)$, the algorithm cannot converge to any state other than the supremum of $S_k$. Thus, the state space of the problem is partitioned into $S_1, S_2\cdots, S_w$ where each subset $S_k$ contains one optimal state, say $s_{k_{opt}}$, and from all states in $S_k$, the algorithm converges to $s_{k_{opt}}$. 
    
    Each subset, $S_1, S_2, \cdots, S_w$, forms a $\prec$-lattice where $s[i]\prec s'[i]$ iff $\textsc{State-Value-DS}(i,s)>\textsc{State-Value-DS}(i,s')$ and $s\prec s'$ iff $\textsc{Rank-DS}(s)>\textsc{Rank-DS}(s')$. 
    This 
    shows that \Cref{algorithm:ds-ll} is lattice-linear.
\end{proof}

\begin{exampledscont}\label{example:fullylatticelinear}
    For $G_4$ the lattices induced under \Cref{algorithm:ds-ll} are shown in \Cref{figure:full-lattices-from-ds-example}; each vector represents a global state $\langle v_1[st]$, $v_2[st]$, $v_3[st]$, $v_4[st]\rangle$. 
    \qed 
\end{exampledscont}

\begin{figure}[ht]
    \centering
    \subfigure[]{
        \begin{tikzpicture}[scale=.7,every node/.style={scale=.6}]
            \node at (0,0) (a1) {$\langle$IN,OUT,IN,OUT$\rangle$};
            \node at (-1.5,-1) (a2) {$\langle$IN,OUT,IN,IN$\rangle$};
            \node at (1.5,-1) (a3) {$\langle$IN,IN,IN,OUT$\rangle$};
            \node at (0,-2) (a4) {$\langle$IN,IN,IN,IN$\rangle$};
            \draw (a1) -- (a2);
            \draw (a1) -- (a3);
            \draw (a2) -- (a4);
            \draw (a3) -- (a4);
        \end{tikzpicture}
    }
    \subfigure[]{
        \begin{tikzpicture}[scale=.7,every node/.style={scale=.6}]
            \node at (0,0) (a1) {$\langle$OUT,IN,OUT,IN$\rangle$};
            \node at (-1.5,-1) (a2) {$\langle$OUT,IN,OUT,OUT$\rangle$};
            \node at (1.5,-1) (a3) {$\langle$OUT,OUT,OUT,IN$\rangle$};
            \node at (0,-2) (a4) {$\langle$OUT,OUT,OUT,OUT$\rangle$};
            \draw (a1) -- (a2);
            \draw (a1) -- (a3);
            \draw (a2) -- (a4);
            \draw (a3) -- (a4);
        \end{tikzpicture}
    }
    \subfigure[]{
        \begin{tikzpicture}[scale=.7,every node/.style={scale=.6}]
            \node at (0,0) (a1) {$\langle$OUT,IN,IN,OUT$\rangle$};
            \node at (-1.5,-1) (a2) {$\langle$OUT,IN,IN,IN$\rangle$};
            \node at (1.5,-1) (a3) {$\langle$OUT,OUT,IN,OUT$\rangle$};
            \node at (0,-2) (a4) {$\langle$OUT,OUT,IN,IN$\rangle$};
            \draw (a1) -- (a2);
            \draw (a1) -- (a3);
            \draw (a2) -- (a4);
            \draw (a3) -- (a4);
        \end{tikzpicture}
    }
    \subfigure[]{
        \begin{tikzpicture}[scale=.7,every node/.style={scale=.6}]
            \node at (0,0) (a1) {$\langle$IN,OUT,OUT,IN$\rangle$};
            \node at (-1.5,-1) (a2) {$\langle$IN,IN,OUT,IN$\rangle$};
            \node at (1.5,-1) (a3) {$\langle$IN,OUT,OUT,OUT$\rangle$};
            \node at (0,-2) (a4) {$\langle$IN,IN,OUT,OUT$\rangle$};
            \draw (a1) -- (a2);
            \draw (a1) -- (a3);
            \draw (a2) -- (a4);
            \draw (a3) -- (a4);
        \end{tikzpicture}
    }
    \caption{The lattices induced by \Cref{algorithm:ds-ll} on the graph $G_4$ described in \Cref{example:dominating-set-definition}. Transitive edges are not shown for brevity.}
    \label{figure:full-lattices-from-ds-example}
\end{figure}

\section{Fully Lattice-Linear Algorithm for Graph Colouring (\gc)}\label{section:gc-ll}

In this section, we describe a fully lattice-linear algorithm for \gc. We first define the \gc problem, and then we describe an algorithm for \gc.

\begin{definition}\textbf{Graph colouring}. 
    In the \gc problem, the input is an arbitrary graph $G$ with some initial colouring assignment $\forall i\in V(G): i[colour]\in \mathbb{N}$. The task is to (re)assign the colour values of the nodes such that any adjacent nodes should not have a conflict (i.e., should not have the same colour), and there should not be a node whose colour can be reduced without conflict.
\end{definition}

We describe the algorithm as \Cref{algorithm:gc-ll}. Any node $i$, which has a conflicting colour with any of its neighbours, or if its colour value is reducible, is an \textit{unsatisfied} node. A node having a conflicting or reducible colour changes its colour to the lowest non-conflicting value iff it is \imped, i.e., any node $j$ in $Adj_i$ with ID more than $i$ is not unsatisfied. In the case that $i$ is \imped, if $i$ has a conflict with any of its neighbours, then we call it \textit{conflict-\imped}, if, otherwise, its colour is reducible, then we call it \textit{reducible-\imped}.

\begin{algorithm}\label{algorithm:gc-ll}Algorithm for \gc.
    \begin{center}
        \begin{tabular}{|l|}
            \hline 
            $\textsc{Conflicted-GC}(i)\equiv \exists j\in Adj_i:j[colour]=i[colour]$.\\
            $\textsc{Reducible-GC}(i)\equiv \exists c\in\mathbb{N},c<i[colour]:$\\ \quad\quad $(\forall j\in Adj_i:c\neq j[colour])$.\\
            \textsc{Unsatisfied-GC}$(i)$ $\equiv \textsc{Conflicted-GC}(i)\lor$\\ \quad\quad $ \textsc{Reducible-GC}(i)$.\\
            
            $\textsc{\Imped-GC}(i)\equiv\textsc{Unsatisfied-GC}(i)\land$
            ~\\
            \quad\quad $(\forall j\in Adj_i:\lnot\textsc{Unsatisfied-GC}(j)\lor$ $ i[id]>j[id])$.
            ~\\~\\
            Rules for node $i$.\\
            $\textsc{\Imped-GC}(i) \longrightarrow i[colour]=\min\{c\in \mathbb{N}:$\\ \quad\quad $ \forall j\in Adj_i, c\neq j[colour]\}$.\\
            \hline 
        \end{tabular}
    \end{center}
\end{algorithm}


    

\begin{lemma}\label{lemma:gc-decrease}
     Under \Cref{algorithm:gc-ll}, the colour value may increase or decrease at its first move, after which, its colour value monotonously decreases.
\end{lemma}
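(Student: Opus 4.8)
The plan is to analyze the behavior of a single node $i$ across its sequence of moves under \Cref{algorithm:gc-ll}, establishing that after its first move, each subsequent move strictly decreases $i[colour]$. First I would fix a node $i$ and consider the consecutive moves it makes. The key observation is that whenever $i$ executes, it sets $i[colour]$ to $\min\{c\in\mathbb{N}: \forall j\in Adj_i, c\neq j[colour]\}$, i.e., the \emph{smallest} colour not used by any current neighbour. So after any move, $i$ holds the locally minimal available colour relative to the neighbourhood it observed during that move.

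The main argument is to show that after $i$'s first move, whenever $i$ moves again it must be \textsc{Reducible-GC} rather than \textsc{Conflicted-GC}, and that reducibility forces a strict decrease. First I would argue that immediately after $i$'s first move, $i$ holds a non-conflicting colour (the min of available colours excludes every neighbour colour $i$ saw). For $i$ to become unsatisfied again, either (a) a neighbour moves into $i$'s colour creating a conflict, or (b) a neighbour vacates a colour below $i[colour]$, making $i$ reducible. Here I would invoke \Cref{algorithm:gc-ll}'s \textsc{\Imped-GC} guard: when $i$ is \imped and moves, no node $j\in V(G)$ with $j[id]>i[id]$ is unsatisfied, and only the highest-ID unsatisfied node moves. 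I would use this ordering to control which neighbours can have changed $i$'s neighbourhood between $i$'s moves, aiming to rule out a neighbour re-introducing a conflict at $i$'s current colour after the first move. The intended conclusion is that any later unsatisfaction of $i$ is of the reducible type, so the new colour $\min\{c : \cdots\}$ is strictly below the current $i[colour]$, giving monotone decrease.

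For the first-move exception, I would simply note that before $i$'s first move, $i[colour]$ is an arbitrary initial value in $\mathbb{N}$, so the min available colour could lie either above or below it (e.g., if the initial colour is large it may drop, but if it is small and conflicting it may need to rise to the first free value), which is exactly why the statement permits an increase \emph{or} decrease at the first move only. The cleanest way to frame the whole lemma is: (1) establish the update sets the least available colour; (2) show post-first-move the colour is conflict-free; (3) show that the \imped ordering plus \Cref{lemma:ds-no-step-back}-style reasoning prevents a neighbour from creating a fresh conflict at $i$'s colour, leaving reducibility as the only cause of further moves; (4) conclude reducible moves strictly decrease the colour.

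The hard part will be step (3): carefully justifying why, after $i$'s first move, no neighbour can force $i$ back into a \textsc{Conflicted-GC} state at $i$'s \emph{current} colour value. This requires reasoning about the interaction between the ID-based priority in \textsc{\Imped-GC} and how neighbours' colours evolve — specifically ruling out the scenario where a higher-priority neighbour, after $i$ has settled, adopts a colour equal to $i[colour]$. I expect I will need to argue that a neighbour adopting $i[colour]$ would itself have been conflicting (hence would have chosen a different minimal colour, since $i$ already occupies that value), or to lean on the monotone-read/priority structure to show such a conflicting reassignment cannot recur in a way that raises $i$'s colour. Nailing this case distinction cleanly, without hidden assumptions about the scheduler, is where the proof's real content lies.
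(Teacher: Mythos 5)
Your proposal is correct and follows essentially the same route as the paper's proof: the first move may be conflicted (colour goes up or down) or reducible (down), and afterwards no neighbour can ever re-create a conflict at $i$, so every later move of $i$ is a reducible move and strictly decreases $i[colour]$. The one refinement worth making is that the two arguments you offer as alternatives for your step (3) are in fact used \emph{together} in the paper: the global ID-priority plus monotone reads serialize execution so that an \imped neighbour $j$ acts on $i$'s \emph{current} colour (each node waits until it reads the other's updated value before its guard can fire), and the minimum-available-colour rule then guarantees $j$ never adopts that colour — neither mechanism alone closes the asynchrony loophole where $j$ picks a colour avoiding $i$'s old value but colliding with its new one.
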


\begin{proof}

    
    When some node $i$ is deemed \imped for the first time, it may be conflicted or reducible. In either case, it obtains a colour value that is not conflicting with the colour value of its neighbours. The updated colour of $i$ will be a value from 1 to $|Adj_i|+1$. At this time, no neighbour of $i$ can change its colour. 
        
    Now, we show that $i$ will not become conflicted again, after becoming \imped for the first time. Under \Cref{algorithm:gc-ll}, any node $j$ in $Adj_i$ will not change its colour until it obtains the updated colour value of $i$. (If $j$ reads old information about $i[color]$ then it will continue to wait for $i$ to execute as required by the guard \textsc{\Imped-GC}.) If in case some node $j$ in $Adj_i$ becomes impedensable, then it must obtain a colour value that is not equal to the copy of $i[colour]$ that it reads/stores. Thus $i$ does not become conflicted by the action of $j$.

    Thus, after $i$ becomes \imped for the first time, it only reduces its colour in every subsequent move.
\end{proof}

To demonstrate that \Cref{algorithm:gc-ll} is lattice-linear, we define the state value and rank, auxiliary variables associated with nodes and global states, as follows.
    $$
    \begin{array}{l}
        \textsc{State-Value-GC}(i,s)=\\
        \begin{cases}
            deg(i)+2 & \text{if $\textsc{Conflicted-GC}(i)$ in state $s$} \\
            i[colour] & \text{otherwise}
        \end{cases}
    \end{array}
    $$
$$\textsc{Rank-GC}(s)=\sum\limits_{i\in V(G)}\textsc{State-Value-GC}(i,s).$$

\begin{theorem}\label{theorem:gc-ll}
    \Cref{algorithm:gc-ll} is a silent self-stabilizing and lattice-linear algorithm executed by $n$ nodes running asynchronously.
\end{theorem}

\begin{proof}
    From the proof of \Cref{lemma:gc-decrease}, we have that for any node $i$, $\textsc{State-Value-GC}(i)$ decreases when $i$ is \imped and never increases. This is because $i$ can increase its colour only once, after which it obtains a colour that is not in conflict with any of its neighbours, so any move that $i$ makes after that will reduce its colour. Therefore, $\textsc{Rank-GC}$ monotonously decreases until no node is \imped. This shows that \Cref{algorithm:gc-ll} is self-stabilizing. 
    
    In any suboptimal global state, at least one node is \imped, e.g., the highest ID node that is unsatisfied. Thus, a suboptimal global state will transition to a global state with a lesser rank. Since there are only a bounded number of colour values from 1 to $deg(i)+1$, a node can become reducible only a bounded number of times. If no node is conflicted and no node is reducible, and no node is \imped, no node makes a move. This shows that \Cref{algorithm:gc-ll} is silent.
    
    \Cref{algorithm:gc-ll} exhibits properties similar to \Cref{algorithm:ds-ll} which are elaborated in the proof for its lattice-linearity in \Cref{theorem:ds-ll}. Thus, \Cref{algorithm:gc-ll} is also lattice-linear.
\end{proof}

\section{Limitations of using Simple Actions and Tiebreakers for Developing FLLA}
\label{section:vc-no-ll}

We studied that lattice-linear algorithms for \mds and \gc can be designed by simply using tie-breakers.
Hence, a natural question arises if a lattice-linear algorithm can be designed for other graph theoretic problems by using some tie-breaker. The answer is no. Specifically, we cannot extend this design to develop algorithms for all graph theoretic problems -- we study minimal vertex cover (\mvc) and maximal independent set (\mis) problems in this context.

We first show (\Cref{subsection:vc-no-tie-breaker}) the issues involved in an algorithm that simply uses a tie-breaker to decide which node enters or leaves the vertex cover. Specifically, we show that this design results in cyclic behaviour.
Such behaviour is observed when we use \textit{simple actions}, where a node only changes the state of itself when it evaluates that its guards are true, with arbitrary-distance tie-breaker.zzzzzzzz
Similar results can be derived for the \mis problem.
Subsequently, in \Cref{subsection:mvc-ll} (respectively, in \Cref{subsection:mis-ll}), we show that a lattice-linear algorithm can be developed for \mvc (respectively, \mis) with \textit{complex actions}, where a node is allowed to make changes to the variables of other nodes. Then, in \Cref{subsection:complex-actions-vc-is}, we elaborate on the properties of algorithms, that we present, for \mvc and \mis.

\subsection{Issues in Using Only a Tie-Breaker in Algorithm for Minimal Vertex Cover (\mvc)}\label{subsection:vc-no-tie-breaker}


\begin{definition}\label{definition:mvc}\textbf{Minimal Vertex Cover.}
    In the MVC problem, the input is an arbitrary graph $G$, and the task is to compute a minimal set $\mathcal{V}$ such that for any edge $\{i,j\}\in E(G)$, $(i\in \mathcal{V})\lor (j\in \mathcal{V})$. If a node $i$ is in $\mathcal{V}$, then $i[st]=IN$, otherwise $i[st]=OUT$.
\end{definition}

We could use the macros $addable$ (some edge of a subject node is not covered) and $removable$ (removing the node preserves the vertex cover) to design an algorithm for \mvc. 
However, this design results in a cyclic behaviour, with respect to the local state transition of a node, even with a tie-breaker with all other nodes in the graph.
To illustrate this, consider the execution of an algorithm with such a tie-breaker on a line graph of 4 nodes (ID'd 1-4, sequentially) where all nodes are initialized to $OUT$.
Here, node 4 can change its state to $IN$. Other nodes cannot change their state because there is a node with a higher ID that can enter the vertex cover. After node 4 enters the vertex cover, node 3 enters the vertex cover, as edge $\{2, 3\}$ is not covered. However, this requires node 4 to leave the vertex cover to keep it minimal. 

Observe, above, that node $4$ was initialized such that $4[st]=OUT$, then it changed to $4[st]=IN$ and subsequently changed again to $4[st]=OUT$. 
Thus, we see a cyclic behaviour which is not desired in a lattice-linear algorithm. 
%
%
%
%
This analysis also shows that the use of simple actions results in the system exhibiting a cyclic behaviour. However, we have that complex actions can be utilized to move around this issue, which we study in the following.

\subsection{Fully Lattice-Linear Algorithm for Minimal Vertex Cover (\mvc)}\label{subsection:mvc-ll}

In \Cref{subsection:vc-no-tie-breaker}, we discussed the issues that arise in using (1) only a tie-breaker, and (2) simple actions. Based on these limitations, in this section, we describe a lattice-linear algorithm that utilizes complex actions to solve the \mvc problem. 

We use the following macros. 
A node $i$ is \textit{removable} iff $i$ is in the vertex cover, and all the neighbours of $i$ are also in the vertex cover.
$i$ is \textit{addable} iff $i$ is out of the vertex cover and there is some edge $\{i,j\}$ incident on $i$ such that $j$ is not in the vertex cover.
$i$ is \textit{unsatisfied} iff $i$ is removable or $i$ is addable.
$i$ is \textit{impedendable} iff $i$ is unsatisfied and there is no node $j$ in distance-3 of $i$, with ID greater than $i$, such that $j$ is unsatisfied.
\begin{center}
    \begin{tabular}{|l|}
        \hline 
        \textsc{Removable-VC}$(i)$ $\equiv i[st]=IN\land$\\ \quad\quad\quad\quad$(\forall j\in Adj_i: j[st]=IN)$.\\
        \textsc{Addable-VC}$(i)$ $\equiv i[st]=OUT\land$\\ \quad\quad\quad\quad$(\exists j\in Adj_i:j[st]=OUT)$.\\
        \textsc{Unsatisfied-VC}$(i)$ $\equiv \textsc{Removable-VC}(i)\lor$\\ \quad\quad\quad\quad$\textsc{Addable-VC}(i)$.\\
        $\textsc{\Imped-II-VC}(i)\equiv\textsc{Unsatisfied-VC}(i)\land$\\ \quad\quad\quad\quad $(\forall j\in Adj^3_i:\lnot\textsc{Unsatisfied-VC}(j)\lor$\\ \quad\quad\quad\quad$i[id]>j[id])$.\\
        \hline 
    \end{tabular}
\end{center}

The algorithm is defined as follows. An \textit{addable-\imped} node $i$ turns itself in and forces all its removable neighbours out. 
(after accounting for the fact that $i$ has already turned in). 
In this version of the algorithm, this complex action is assumed to be atomic. 
A \textit{removable-\imped} node $i$ will move out of the vertex cover. 

\begin{algorithm}\label{algorithm:vc-ll}Rules for node $i$.
\end{algorithm}
\centerline{
$\begin{array}{|l|}
    \hline 
    \textsc{\Imped-II-VC}(i) \longrightarrow\\
    \begin{cases}
        i[st]=IN.~\forall j\in Adj_i:\\ \quad\quad j[st]=OUT,\\ \quad\quad\text{if $\textsc{Removable}(j)$}. & \text{if $\textsc{Addable-VC}(i)$}.\\
        i[st] = OUT. & \text{otherwise}
    \end{cases}~\\
    \hline 
\end{array}$
}


\begin{lemma}\label{lemma:vc-no-step-back}
An addable (respectively, removable) node that enters (respectively, leaves) the vertex cover does not become removable (respectively addable) in any future time.

\end{lemma}

\begin{proof}
    Let $s$ be the state at time $t$ while \Cref{algorithm:vc-ll} is executing. We have from \Cref{algorithm:vc-ll} that if a node $i$ is addable-\imped or removable-\imped, then no other node in $Adj^3_i$ changes its state.

    If $i$ is removable-\imped at $t$, then all the nodes in $Adj_i$ are in the vertex cover. 
    Any node $j\in Adj_i$ cannot move out of the vertex cover until $i$ moves out. Hence, $i$ does not become addable after being removed at time $t$.

    If $i$ is addable-\imped at $t$, then some node in $Adj_i$ is out of the vertex cover. 
    After $i$ moves in and forces its removable neighbours out, $i$ is no longer addable and all nodes in $Adj_i$ are no longer removable.
    Now, $i$ can become removable only if some neighbour of $i$ enters the vertex cover. 
    Let $j$ be a neighbour of $i$. We consider two cases (1) $j[st]=OUT$ (2) $j[st]=IN$ and $j$ was forced out by $i$ (3) $j[st]=IN$ and $j$ was not forced out by $i$. 
    
    In the first case, as long as $j$ never enters the vertex cover, $i$ cannot become removable, thereby ensuring that edge $\{i, j\}$ remains covered. In the event $j$ enters the vertex cover, $i$ may be forced out. However, (as a result) $i$ does not become removable-\imped. 
    
    In the second case, $j$ is forced out of the vertex cover. This implies that all neighbours of $j$ are already in the vertex cover. Hence, $j$ does not become addable again. (This argument is the same as the above argument that showed that if a node is removed from the vertex cover, it does not become addable.) Since $j$ is never added back to the vertex cover, $i$ cannot become removable because of the action of $j$.
    
    In the third case, even if $j$ moves out afterwards, it cannot make $i$ removable, 
    so after time $t$, nodes in $Adj_i$ do not move in; as a result, $i$ does not have to move out after moving in. In addition, the nodes that $i$ turned out of the vertex cover do not have to move in after $t$, because the neighbours of those nodes are not changing their states simultaneously when their states are being changed.
    

    
    From the above cases, we have that $i$ does not change its state to $i[st]$ after changing its state from $i[st]$ to $i[st']$.
    throughout the execution of \Cref{algorithm:vc-ll}.
\end{proof}

To demonstrate that \Cref{algorithm:vc-ll} is lattice-linear, we define state value and rank, auxiliary variables associated with nodes and global states, as follows:
$$
    \begin{array}{l}
        \textsc{State-Value-VC}(i,s)=\\
        \begin{cases}
            1 & \text{if $\textsc{Unsatisfied-VC}(i)$ in state $s$} \\
            0 & \text{otherwise}
        \end{cases}
    \end{array}
$$
$$
    \textsc{Rank-VC}(s)=\sum\limits_{i\in V(G)}\textsc{State-Value-VC}(i,s).
$$



\begin{theorem}\label{theorem:vc-ll}
    \Cref{algorithm:vc-ll} is a silent self-stabilizing and lattice-linear algorithm executed by $n$ nodes running asynchronously.
\end{theorem}

\begin{proof}
    We have from the proof of \Cref{lemma:vc-no-step-back} that if $G$ is in state 
    $s$ and $\textsc{Rank-VC}(s)$ is non-zero, then at least one node will be \imped, e.g., an unsatisfied node with the highest ID.
    For any node $i$, we have that $\textsc{State-Value-VC}(i)$ decreases whenever $i$ is \imped and never increases. In addition, by the action of $i$, the state value of any other node in $G$ does not increase. (Note that adding node $i$ to the vertex cover may have caused a neighbour $j$ of $i$ to be removable. If this happens, $j$ is removed from the vertex cover in the same action atomically. Hence, $j$ does not become unsatisfied.)
    In effect, $\textsc{Rank-VC}$ monotonously decreases throughout the execution of the algorithm until it becomes zero. This shows that \Cref{algorithm:vc-ll} is self-stabilizing. Once \textsc{Rank-VC} is zero, no node is \imped, so no node makes a move. This shows that \Cref{algorithm:vc-ll} is silent.
    
    \Cref{algorithm:vc-ll} exhibits properties similar to \Cref{algorithm:ds-ll} (as well as \Cref{algorithm:gc-ll}) which are elaborated in the proof for its lattice-linearity in \Cref{theorem:ds-ll}. From there, we obtain that \Cref{algorithm:vc-ll} also is lattice-linear.
\end{proof}

\begin{example}\label{example:vc-semilattice}
    Let $G^p_4$ be a graph of four vertices forming a path $\langle v_1,v_2,v_3,v_4\rangle$ such that $v_i[id]>v_j[id]$ iff $i>j$. In \Cref{figure:vc-semilattice-example}, we show all possible state transitions that $G^p_4$ can go through under \Cref{algorithm:vc-ll}. The global states in the figure are of the form $\langle v_1[st],v_2[st],v_3[st],v_4[st]\rangle$.
    \qed 
\end{example}

\begin{figure}[ht]
    \centering
    
    \subfigure[]{
        \begin{tikzpicture}[scale=.7,every node/.style={scale=.7}]
            \node at (0,3) (a6) {$\langle OUT,IN,IN,OUT\rangle$};
            \node at (0,2) (a2) {$\langle OUT,OUT,IN,OUT\rangle$};
            \node at (0,1) (a1) {$\langle OUT,OUT,OUT,$\underline{$IN$}$\rangle$};
            \node at (0,0) (a0) {$\langle OUT,OUT,OUT,OUT\rangle$};
            
            \node at (4,1) (a3) {$\langle OUT,OUT,IN,IN\rangle$};
            
            \node at (4,2) (a7) {$\langle OUT,IN,IN,IN\rangle$};

            \draw (a0) -- (a1) -- (a2) -- (a6);
            \draw (a3) -- (a2);
            \draw (a7) -- (a6);
        \end{tikzpicture}
    }
    \subfigure[]{
        \begin{tikzpicture}[scale=.7,every node/.style={scale=.7}]
            \node at (0,5) (a5) {$\langle OUT,IN,OUT,IN\rangle$};
            \node at (0,4) (a4) {$\langle OUT,IN,OUT,OUT\rangle$};
            
            \node at (4,4) (a13) {$\langle IN,IN,OUT,IN\rangle$};
            \node at (4,3) (a12) {$\langle IN,IN,OUT,OUT\rangle$};

            \draw (a4) -- (a5);
            \draw (a12) -- (a13) -- (a5);
        \end{tikzpicture}
    }
    \subfigure[]{
        \begin{tikzpicture}[scale=.7,every node/.style={scale=.7}]
            \node at (4,10) (a10) {$\langle IN,OUT,IN,OUT\rangle$};
            \node at (0,9) (a9) {$\langle IN,OUT,OUT,$\underline{$IN$}$\rangle$};
            \node at (0,8) (a8) {$\langle IN,OUT,OUT,OUT\rangle$};
            
            \node at (4,9) (a11) {$\langle IN,OUT,IN,IN\rangle$};
            
            \node at (8,9) (a14) {$\langle IN,IN,IN,OUT\rangle$};
            \node at (8,8) (a15) {$\langle IN,IN,IN,IN\rangle$};
            
            \draw (a8) -- (a9) -- (a10);
            \draw (a11) -- (a10);
            \draw (a15) -- (a14) -- (a10);
        \end{tikzpicture}
    }
    \caption{$\prec$-lattices formed by the global states of $G^p_4$ of 4 nodes forming a straight path $\langle v_1,v_2,v_3,v_4\rangle$ under \Cref{algorithm:vc-ll}. The nodes that are kicked out, by a node that decides to move into the vertex cover, are underlined.}
    \label{figure:vc-semilattice-example}
\end{figure}

\subsection{Fully Lattice-Linear Algorithm for Maximal Independent Set (\mis)}\label{subsection:mis-ll}

In this subsection, we describe an algorithm for the \mis. The issues, similar to the issues that we discussed in \Cref{subsection:vc-no-tie-breaker}, can be observed for \mis problem as well. For example, a similar behaviour can be observed on a path of 4 nodes, all initialized to $IN$. However, we can follow the general design of \Cref{algorithm:vc-ll}, that we described in \Cref{subsection:mvc-ll}, to develop an algorithm for \mis. First, we define the \mis problem as follows.

\begin{definition}\label{definition:mis}\textbf{Maximal Independent Set.}
    In the \textit{maximal independent set} (\mis) problem, the input is an arbitrary graph $G$, and the task is to compute a maximal set $\mathcal{I}$ such that for any two nodes $i\in\mathcal{I}$ and $j\in\mathcal{I}$, if $i\neq j$, then $\{i,j\}\neq E(G)$.
\end{definition}

The macros that we use here are similar to the macros we used for \mvc, but with opposite polarity. We use the following macros. 
A node $i$ is \textit{addable} iff $i$ is out of the independent set (\is), and all the neighbours of $i$ are also out of the \is.
$i$ is \textit{removable} iff $i$ is in the \is and there is some edge $\{i,j\}$ incident on $i$ such that $j$ is also in the \is.
$i$ is \textit{unsatisfied} iff $i$ is removable or $i$ is addable.
$i$ is \textit{\imped} iff $i$ is unsatisfied and there is no node $j$ in distance-3 of $i$, with ID greater than $i$, such that $j$ is unsatisfied.

\begin{center}
    \begin{tabular}{|l|}
        \hline 
        \textsc{Addable-IS}$(i)$ $\equiv i[st]=OUT\land$\\ \quad\quad\quad\quad$(\forall j\in Adj_i j[st]=OUT)$.\\
        \textsc{Removable-IS}$(i)$ $\equiv i[st]=IN\land$\\ \quad\quad\quad\quad$(\exists j\in Adj_i:j[st]=IN)$.\\
        \textsc{Unsatisfied-IS}$(i)$ $\equiv \textsc{Removable-IS}(i)\lor$\\ \quad\quad\quad\quad$\textsc{Addable-IS}(i)$.\\
        $\textsc{\Imped-II-IS}(i)\equiv\textsc{Unsatisfied-IS}(i)\land$\\ \quad\quad\quad\quad $(\forall j\in Adj^2_i:\lnot\textsc{Unsatisfied-IS}(j)\lor$\\ \quad\quad\quad\quad$i[id]>j[id])$.\\
        \hline 
    \end{tabular}
\end{center}

Given the similarities in the \mvc and \mis problems, the algorithm we develop here is similar to the algorithm we developed for \mvc. The algorithm is defined as follows. A \textit{removable-\imped} node $i$ turns itself out and moves all its addable neighbours into the independent set (after accounting for the fact that $i$ has already turned out). An \textit{addable-\imped} node $i$ will turn itself into the independent set. In the present design of the algorithm, the execution while $i$ is \imped is assumed to be executed atomically.

\begin{algorithm}\label{algorithm:is-ll}Rules for node $i$.
\end{algorithm}
\begin{center}
    $\begin{array}{|l|}
        \hline 
        \textsc{\Imped-II-IS}(i) \longrightarrow\\
        \begin{cases}
            i[st]=OUT.~\forall j\in Adj_i:\\ \quad\quad j[st]=IN,\\ \quad\quad\text{if $\textsc{Addable}(j)$}. & \text{if $\textsc{Removable-IS}(i)$}.\\
            i[st] = IN. & \text{otherwise}
        \end{cases}~\\
        \hline 
    \end{array}$
\end{center}

Since the behaviour of \Cref{algorithm:is-ll} is similar to the behaviour of \Cref{algorithm:vc-ll}, we briefly cover the description of the behaviour of \Cref{algorithm:is-ll} in the following: \Cref{lemma:is-no-step-back} and \Cref{theorem:is-ll}.

\begin{lemma}\label{lemma:is-no-step-back}
    Any node in an input graph does not revisit its older state while executing under \Cref{algorithm:is-ll}.
\end{lemma}

\begin{proof}
    Let $s$ be the state at time $t$ while \Cref{algorithm:is-ll} is executing. We have from \Cref{algorithm:is-ll} that if a node $i$ is addable-\imped or removable-\imped, then no other node in $Adj^3_i$ changes its state.
    
    If $i$ is removable-\imped at $t$, then some node in $Adj_i$ is in the independent set. After when $i$ moves out, and turns its addable neighbours in, then we have that $i$ is no longer removable and all nodes in $Adj_i$ are no longer addable, so after time $t$, nodes in $Adj_i$ do not move out; as a result, $i$ does not have to move in after moving out. In addition, the nodes that $i$ turned into the independent set do not have to move out after $t$, because the neighbours of those nodes are not changing their states simultaneously when their states are being changed.
    
    Let that at some instance of time, two nodes $i$ and $j$ simultaneously evaluate that they are removable-\imped. Since no nodes in $Adj_i^3$ (respectively, $Adj_{j}^3$) change their state under \Cref{algorithm:is-ll} until $i$ (respectively, $j$) changes its state, we have that $i$ and the nodes that $i$ would move into the independent set are not adjacent to $j$ or the nodes that $j$ would move into the independent set. Thus, no node by the action of $i$ becomes removable.
    
    If otherwise $i$ is addable-\imped at $t$, then all the nodes in $Adj_i$ are out of \is. So after when $i$ moves in, none of the non-unsatisfied nodes in $Adj_i$ (non-addable nodes in $Adj_i$), including $i$, becomes unsatisfied (addable). As a result, $i$ does not have to move out after moving in.
    
    From the above cases, we have that $i$ does not change its state to $i[st]$ after changing its state from $i[st]$ to $i[st']$.
    throughout the execution of \Cref{algorithm:is-ll}.
\end{proof}

To demonstrate that \Cref{algorithm:is-ll} is lattice-linear, we define state value and rank, auxiliary variables associated with nodes and global states, as follows:
$$
    \begin{array}{l}
        \textsc{State-Value-IS}(i,s)=\\
        \begin{cases}
            1 & \text{if $\textsc{Unsatisfied-IS}(i)$ in state $s$} \\
            0 & \text{otherwise}
        \end{cases}
    \end{array}
$$
$$
    \textsc{Rank-IS}(s)=\sum\limits_{i\in V(G)}\textsc{State-Value-IS}(i,s).
$$



\begin{theorem}\label{theorem:is-ll}
    \Cref{algorithm:is-ll} is a silent self-stabilizing and lattice-linear algorithm executed by $n$ nodes running asynchronously.
\end{theorem}

\begin{proof}
    We have from the proof of \Cref{lemma:is-no-step-back} that if $G$ is in state 
    $s$ and $\textsc{Rank-IS}(s)$ is non-zero, then at least one node will be \imped, e.g., an unsatisfied node with the highest ID.
    For any node $i$, we have that $\textsc{State-Value-IS}(i)$ decreases whenever $i$ is \imped and never increases. In addition, by the action of $i$, the state value of any other node in $G$ does not increase.
    In effect, $\textsc{Rank-IS}$ monotonously decreases throughout the execution of the algorithm until it becomes zero. This shows that \Cref{algorithm:is-ll} is self-stabilizing.
    Once \textsc{Rank-DS} is zero, no node is \imped, so no node makes a move. This shows that \Cref{algorithm:is-ll} is silent.
    
    \Cref{algorithm:is-ll} exhibits properties similar to \Cref{algorithm:ds-ll} (as well as \Cref{algorithm:gc-ll} and \Cref{algorithm:vc-ll}) which are elaborated in the proof for its lattice-linearity in \Cref{theorem:ds-ll}. From there, we obtain that \Cref{algorithm:is-ll} also is lattice-linear.
\end{proof}

\subsection{Complex Actions: Properties Shared by Algorithms for \mvc and \mis}\label{subsection:complex-actions-vc-is}

In this subsection, we study some behavioural aspects   of the algorithm for \mvc present in \Cref{subsection:mvc-ll}. Consequently, similar arguments for the algorithm for \mis will follow.


\Cref{algorithm:vc-ll} can be transformed into a simple action algorithm as follows. To accommodate that, we use the variable $i[addable]$ to set to be $true$ so that the surrounding nodes can then evaluate if they are removable. We use the following additional guards.
For a node $i$, \textit{else-pointed} is true iff a node $j$ in $Adj^4_i$ moved into the \vc (and set $j[addable]$ to $true$), and there is a node $k$ in $Adj_j$ that is removable.

\begin{center}
        $\textsc{Else-Pointed}(i)\equiv \exists j\in Adj^4_i:(j[addable=true\land \exists k\in Adj_j: \textsc{Removable}(k)])$
\end{center}

Consequently, the algorithm can be modified as follows. A node $i$ will not be enabled (\imped) if else-pointed is true for $i$. The modified algorithm that allows simple actions, thus, is defined as follows.

\begin{algorithm}\label{algorithm:vc-ll-simple-actions}
    Transformed \Cref{algorithm:vc-ll}, where nodes only execute simple actions.
\end{algorithm}
\begin{center}
    $\begin{array}{|l|}
        \hline 
        \textsc{Unsatisfied-II-VC}(i)\equiv \textsc{Removable-VC}(i)\lor\\ \quad\quad\quad\quad\textsc{Addable-VC}(i).\\
        \textsc{\Imped-II-VC}(i)\equiv\\
        (\exists j\in Adj_i:j[addable]=true \land \textsc{Removable}(i))\lor\\
        \quad\quad\quad\quad (\lnot\textsc{Else-Pointed}(i)\land (\textsc{Unsatisfied-VC}(i)\land\\ 
        \quad\quad\quad\quad (\forall j\in Adj^3_i:\lnot\textsc{Unsatisfied-VC}(j)\lor\\ \quad\quad\quad\quad i[id]>j[id]))).\\~\\
        \text{Rules for node $i$}:\\
        \textsc{\Imped-VC-II}(i)\longrightarrow\\
        \begin{cases}
            i[addable]=true & \text{if $i[st]=OUT$.}\\
            i[addable]=false & \text{if $i[st]=IN$.}\\
            i[st]=\lnot i[st] & \text{unconditionally.}\\
        \end{cases}
        ~\\
        \hline 
    \end{array}$
\end{center}

Next, we identify why \Cref{algorithm:vc-ll-simple-actions} can be reconciled with the inability to design an algorithm with simple actions from \Cref{subsection:vc-no-tie-breaker}. This analysis also helps us to obtain the correctness proof of \Cref{algorithm:vc-ll-simple-actions}.

\paragraph{Reconciling \Cref{subsection:vc-no-tie-breaker} and \Cref{algorithm:vc-ll-simple-actions}}

As discussed in \Cref{subsection:vc-no-tie-breaker}, if a tie-breaker in conjunction with simple actions is deployed, then the system would exhibit cyclic behaviour. On the other hand, Algorithm \ref{algorithm:vc-ll-simple-actions} uses only simple actions. We identify the subtlety, involved in both these results, that make this possible, in the following.


To explain how these results can coexist together, we describe the behaviour of \Cref{algorithm:vc-ll-simple-actions} with an example. Assume that this algorithm is deployed on a graph of 4 nodes forming a path, with node IDs being in the sequence $\langle 1,4,3,2\rangle$. In a simple algorithm that uses a tiebreaker with all nodes (which is considered in \Cref{subsection:vc-no-tie-breaker}), node 4 would execute first then node 3, then node 2 and finally node 1. This execution order is not preserved in \Cref{algorithm:vc-ll-simple-actions}, which we discuss as follows.

Specifically, let the initial global state in this graph be $\langle IN,OUT,OUT,OUT\rangle$. First, node 4 will move into the \vc. Now, the node that is unsatisfied and has the highest ID is node 3. However, \textsc{Else-Pointed}(3) is true, because $4[addable]$ is set to $true$ and node $1$
needs to move out of the vertex cover as part of the action that allowed node $4$ to enter the vertex cover.
In other words, node $3$ can execute only after node $1$ leaves the vertex cover. This is not permitted in a algorithm that uses simple actions with tie-breaker on node IDs. This effect is similar to that of priority inheritance, where node $1$ inherits the priority of node $4$ because it has to be forced out of the vertex cover by node $4$. Hence, in this specific case, 
node 1 has a higher priority of movement as compared to node 3, despite the fact that node 3 is unsatisfied and is of a higher ID, because of a recent action committed by node 4.

After node 1 moves out, node 3 moves in, and thence, the system reaches an optimal state. 

\paragraph{Correctness of \Cref{algorithm:vc-ll-simple-actions}}

\Cref{algorithm:is-ll} is lattice-linear with respect to state value and rank, defined as follows.
$$
    \begin{array}{l}
        \textsc{State-Value-II-VC}(i,s)=\\
        \begin{cases}
            |Adj_i+1|\\ \quad\quad \text{if $\textsc{Unsatisfied-II-VC}(i)$ in state $s$} \\
            |\{j\in Adj_i: \textsc{Unsatisfied-II-VC}(j)\}|\\ \quad\quad \text{if in state $s$, $\lnot\textsc{Unsatisfied-II-VC}(i)\land$}\\
            \quad\quad\quad\quad \text{$(\exists j\in Adj_i: \textsc{Unsatisfied-II-VC}(j))$} \\
            0\\ \quad\quad \text{otherwise}
        \end{cases}
    \end{array}
$$
$$
    \textsc{Rank-II-VC}(s)=\sum\limits_{i\in V(G)}\textsc{State-Value-II-VC}(i,s).
$$

Since the working of this algorithm straightforwardly follows from the working of \Cref{algorithm:vc-ll}, we omit the proof of correctness of this algorithm. A similar algorithm can be developed for \mis, that deploys only simple actions.




\section{Time Complexity: Lattice-Linear Algorithms}\label{section:convergence-time}


\begin{theorem}\label{theorem:general-convergence-time}
    Given a system of $n$ processes, with the domain of size not more than $m$ for each process, the acting algorithm will converge in $n\times (m-1)$ moves.
\end{theorem}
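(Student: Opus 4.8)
The plan is to exploit the monotonicity that the $\prec$-lattice structure forces on each individual process. First I would recall from \Cref{definition:<-lattice} that any transition permitted by the lattice moves the global state strictly upward: the system may pass from $s$ to $s'$ only when $s \prec s'$, i.e.\ $(\forall i: s[i]\preceq_l s'[i]) \land (\exists i: s[i]\prec_l s'[i])$. In particular, no process's local state ever decreases under $\prec_l$ along a computation, and the process that executes in a given move strictly increases its own local state under $\prec_l$. This is the general form of the ``no step back'' behaviour established concretely for the \mds algorithm in \Cref{lemma:ds-no-step-back}.

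Next I would bound the number of moves any single process can make. Fix a process $i$. Its local states are totally ordered by $\prec_l$, and by hypothesis this totally ordered set has at most $m$ elements. Since every move of $i$ sends its local state to a strictly $\prec_l$-greater value, the sequence of local states that $i$ occupies over the computation is a strictly increasing chain in a totally ordered set of size at most $m$. The longest such chain contains $m$ distinct elements and hence at most $m-1$ transitions, so $i$ makes at most $m-1$ moves.

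Summing over all $n$ processes, the total number of moves is at most $n\times(m-1)$. Since convergence is reached once no process is \imped---which, as argued in \Cref{theorem:ds-ll}, corresponds to the supremum of the traversed lattice, a state with no outgoing transitions---the computation must terminate within this many moves, establishing the bound.

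I expect the only delicate point to be justifying that every move strictly advances the executing process along $\prec_l$ rather than leaving it fixed; this is where the lattice transition relation of \Cref{definition:<-lattice}, together with the fact that an \imped process is obligated to change (and therefore strictly increase) its local state, must be invoked carefully. The counting itself---bounding a strictly monotone chain in an $m$-element totally ordered set by $m-1$ steps and multiplying by $n$---is routine once this monotonicity is in hand.
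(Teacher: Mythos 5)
Your proposal is correct and rests on exactly the same idea as the paper's proof: in a $\prec$-lattice each node's local state can only move forward along $\prec_l$ (no step-back actions), so each of the $n$ nodes makes at most $m-1$ moves. The paper merely packages this as a proof by contradiction via the pigeonhole principle (more than $n\times(m-1)$ moves would force some node to revisit a local state, i.e., to step back), whereas you argue the same count directly; the two are contrapositives of one another.
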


\begin{proof}
    Assume for contradiction that the underlying algorithm converges in $x\geq n\times (m-1)+1$ moves. This implies, by pigeonhole principle, that at least one of the nodes $i$ is revisiting its state $i[st]$ after changing to $i[st']$. If $i[st]$ to $i[st']$ is a step ahead transition for $i$, then $i[st']$ to $i[st]$ is a step back transition for $i$ and vice versa. For a system where the global states form a $\prec$-lattice, we obtain a contradiction since step back actions are absent in such systems.
\end{proof}

\begin{exampledscont}\label{example:number-of-moves}
    Consider phase 2 of \Cref{algorithm:ds-ellss}. 
    As discussed earlier, this phase is lattice-linear. The domain of each process $\{IN$, $OUT\}$ is of size 2. Hence, phase 2 of \Cref{algorithm:ds-ellss} requires at most $n\times (2-1)=n$ moves. (Phase 1 also requires atmost $n$ moves. But this fact is not relevant with respect to \Cref{theorem:general-convergence-time}.)
     
\end{exampledscont}

\begin{examplesmpcont}
    Observe from \Cref{figure:smplattice} that any system of 3 men and 3 women with arbitrary preference lists will converge in $3\times (3-1)=6$ moves. This comes from 3 men (resulting in 3 processes) and 3 women (domain size of each man (process) is 3).
\end{examplesmpcont}

\begin{corollary}\label{corollary:convergence-time-ll-multivariable}
    Let that the nodes are multivariable, and atmost $r$ variables $i[var_1],...,i[var_r]$ (with domain sizes $m_1',...m_r'$ respectively) contribute to the total order. 
    Such a system converges in $n\times \bigg(\Big(\prod\limits_{j=1}^r m_j'\Big)-1\bigg)$ moves.
     
\end{corollary}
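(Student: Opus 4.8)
The corollary is a multivariable generalization of Theorem 6 (theorem:general-convergence-time). In the single-variable case, each process's local state lives in a domain of size $m$, totally ordered under $\prec_l$, so each process can change its state at most $m-1$ times (moving strictly up a chain of length $m$), giving $n(m-1)$ total moves. Now each node has $r$ relevant variables $i[var_1],\dots,i[var_r]$ with domain sizes $m_1',\dots,m_r'$, and these jointly determine the total order $\prec_l$ among the local states a node visits. The combined local state space has size $\prod_{j=1}^r m_j'$, so the claim is that we simply substitute $m = \prod_{j=1}^r m_j'$ into Theorem 6.

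**The plan.** The approach is to reduce the corollary directly to Theorem 6 (theorem:general-convergence-time) by treating the tuple $(i[var_1],\dots,i[var_r])$ as a single composite variable. First I would observe that the hypothesis states these $r$ variables \emph{contribute to the total order among the local states visited by a node}; that is, the relevant local state of node $i$ is precisely the tuple of these $r$ variable values, and by hypothesis this tuple is totally ordered under $\prec_l$. Second, I would note that the number of distinct values this composite variable can take is at most the product of the individual domain sizes, namely $\prod_{j=1}^r m_j'$, since each value is an element of the Cartesian product $\prod_{j=1}^r \{\text{domain of } var_j\}$. Third, I would invoke Theorem 6 with $m$ replaced by this composite domain size: since the global states form a $\prec$-lattice, step-back transitions are absent, so each node's composite local state strictly ascends a chain in its totally ordered (composite) domain, and hence can change at most $\big(\prod_{j=1}^r m_j'\big)-1$ times. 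Summing over all $n$ nodes yields the bound $n\cdot\big(\big(\prod_{j=1}^r m_j'\big)-1\big)$.

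**Main obstacle.** The only genuinely delicate point is the reduction itself: I must argue that the composite variable behaves exactly like the single variable in Theorem 6, i.e. that the total order $\prec_l$ is genuinely over the tuples and that the lattice structure among global states carries over when local states are tuples rather than scalars. Since Definition 1 (definition:<-lattice) already allows $s[i]$ to be a vector of variables and only requires a total order $\prec_l$ on local states, this carries over cleanly — the lattice machinery never assumed scalar local states. The upper bound $\prod_{j=1}^r m_j'$ on the number of composite values is an inequality rather than an equality because not every tuple in the Cartesian product need be reachable or lie on the node's chain, but this only strengthens (or at worst preserves) the move bound, so it is harmless. Thus the argument is essentially a clean specialization of Theorem 6, and no new combinatorial work beyond the pigeonhole/chain-length observation is needed.
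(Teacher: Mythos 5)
Your proposal is correct and follows exactly the paper's intended reasoning: the paper treats this corollary as an immediate consequence of Theorem~\ref{theorem:general-convergence-time}, with the remark following the corollary making the same point you do---the composite local state (the tuple of the $r$ relevant variables) is what is totally ordered, and its domain size can be as large as $\prod_{j=1}^r m_j'$, so that value is substituted for $m$ in the theorem. Your additional observations (that Definition~\ref{definition:<-lattice} already permits vector-valued local states, and that unreachable tuples only tighten the bound) are consistent with, and slightly more explicit than, the paper's treatment.
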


\noindent\textbf{\textit{Remark}}: In \cite{Gupta2022}, this upper bound is $n\times \Big(\sum\limits_{j=1}^r (m_j'-1)\Big)$. This would be correct if the variables individually do not visit their value again.
However, the size of the local state can be as large as the product of the domain sizes of the individual variables. Then, if the local states form a total order, then the expression in the above corollary is applicable.

\begin{corollary}
    (From \Cref{theorem:ds-ll} and \Cref{theorem:general-convergence-time}) \Cref{algorithm:ds-ll} converges in $n$ moves.
\end{corollary}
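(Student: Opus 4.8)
The plan is to combine the two results the corollary explicitly cites, so the proof is essentially a one-line instantiation. First I would recall that \Cref{theorem:ds-ll} establishes that \Cref{algorithm:ds-ll} is lattice-linear, meaning it induces a $\prec$-lattice among the reachable global states (partitioned into $S_1,\dots,S_w$). This is exactly the hypothesis needed to invoke the convergence bound, since \Cref{theorem:general-convergence-time} applies to any algorithm whose global states form a $\prec$-lattice: within each $S_k$ the algorithm traverses a single lattice, and the bound on the number of moves depends only on the number of processes and the per-process domain size, not on which lattice the system locks into.

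Next I would identify the two parameters that feed into \Cref{theorem:general-convergence-time}. The number of processes is $n = |V(G)|$, one per node. The relevant domain is the set of values each node's local state $i[st]$ can take, namely $\{IN, OUT\}$, which has size $m = 2$. This is the same accounting already carried out in Example MDS continuation \ref{example:number-of-moves} for phase 2 of \Cref{algorithm:ds-ellss}, and the state-value/rank construction in the proof of \Cref{theorem:ds-ll} confirms that the total order a node visits is determined solely by its $IN$/$OUT$ status.

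Finally I would substitute into the formula $n \times (m-1)$ from \Cref{theorem:general-convergence-time}, obtaining $n \times (2-1) = n$, which is the claimed bound. The only point requiring a word of care is the justification that \Cref{theorem:general-convergence-time} legitimately applies despite there being multiple disjoint lattices $S_1,\dots,S_w$ rather than a single one: I would note that the theorem's pigeonhole argument about the absence of step-back transitions holds within whichever lattice $S_k$ the initial state belongs to, and \Cref{lemma:ds-no-step-back} already guarantees that no node revisits an older state under \Cref{algorithm:ds-ll}, so the hypothesis of \Cref{theorem:general-convergence-time} is met globally. I do not anticipate a genuine obstacle here; the main (minor) subtlety is simply making the domain-size identification explicit so that the substitution $m=2$ is unambiguous.
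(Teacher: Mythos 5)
Your proposal is correct and matches the paper's intended argument exactly: the corollary is a direct instantiation of \Cref{theorem:general-convergence-time} with the lattice-linearity guaranteed by \Cref{theorem:ds-ll}, using $n$ processes and domain $\{IN, OUT\}$ of size $m=2$ to get $n\times(2-1)=n$ moves. Your added remark that the pigeonhole/no-step-back argument applies within whichever lattice $S_k$ the initial state selects (backed by \Cref{lemma:ds-no-step-back}) is a sound clarification of a point the paper leaves implicit, not a departure from its approach.
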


\begin{corollary}
    (From \Cref{theorem:gc-ll} and \Cref{theorem:general-convergence-time}) \Cref{algorithm:gc-ll} converges in $\sum\limits_{i\in V(G)}deg(i)+1=n+2m$ moves.
\end{corollary}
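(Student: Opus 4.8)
The plan is to apply Theorem~\ref{theorem:general-convergence-time} to Algorithm~\ref{algorithm:gc-ll}, but the subtlety is that the generic bound $n\times(m-1)$ presupposes a uniform domain size $m$ for every process, whereas in \gc the relevant ``domain'' (the number of distinct values a node traverses along the total order $\prec_l$) differs from node to node. So the first step is to identify, for each node $i$, how many distinct local-state values it can visit during execution. From Lemma~\ref{lemma:gc-decrease} we know that after its first move a node's colour only decreases, so the effective ordered chain of states for node $i$ is bounded by the number of colour values it can hold plus the ``unsatisfied'' sentinel value $deg(i)+2$ used in \textsc{State-Value-GC}. The key observation is that a satisfied node's colour lies in $\{0,1,\dots,deg(i)\}$: since $i$ always recolours to the minimum value avoiding its at most $deg(i)$ neighbours, it never needs a colour exceeding $deg(i)$ (there are $deg(i)+1$ candidate values $0,\dots,deg(i)$, so one is always free). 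Thus the per-node state value ranges over $\{0,1,\dots,deg(i)\}$ together with the unsatisfied marker, giving a chain whose length contributes $deg(i)+1$ to the move count.

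The second step is to sum the per-node bounds. Rather than invoking Theorem~\ref{theorem:general-convergence-time} with a single global $m$, I would use its refinement in the spirit of Corollary~\ref{corollary:convergence-time-ll-multivariable}: the total number of moves is bounded by the sum over nodes of (chain length $-1$) for each node, which here gives $\sum_{i\in V(G)}(deg(i)+1)$. Each node $i$ can make at most $deg(i)+1$ moves because its \textsc{State-Value-GC} starts at some value and strictly decreases at every move it makes (by Theorem~\ref{theorem:gc-ll} the rank, and hence each node's contribution once it settles into its decreasing regime, is monotone), and the value can take at most $deg(i)+2$ distinct values along this monotone chain — but one of those transitions is the single possible increase at the first move absorbed into the $+2$ offset, so the net number of decreasing moves per node is $deg(i)+1$.

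The third step is the arithmetic identity $\sum_{i\in V(G)} deg(i) = 2m$ (the handshaking lemma, since every edge is counted once at each endpoint), so $\sum_{i\in V(G)}\big(deg(i)+1\big) = \Big(\sum_{i} deg(i)\Big) + n = 2m + n$, which is exactly the claimed bound $n+2m$.

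The main obstacle I anticipate is rigorously justifying the per-node move count $deg(i)+1$ rather than something like $deg(i)+2$. The \textsc{State-Value-GC} function assigns the sentinel $deg(i)+2$ to an unsatisfied node and $i[colour]\in\{0,\dots,deg(i)\}$ to a satisfied node, so naively there appear to be $deg(i)+2$ possible distinct state values and hence up to $deg(i)+1$ strict decreases. The delicate point is Lemma~\ref{lemma:gc-decrease}'s allowance that a node may \emph{increase} its colour on its very first move: I must argue that this one-time increase does not add an extra move beyond the bound, because the sentinel value $deg(i)+2$ strictly exceeds every possible colour $\leq deg(i)$, so the transition from ``unsatisfied'' to any colour is already a decrease in \textsc{State-Value-GC}; the possible colour \emph{increase} Lemma~\ref{lemma:gc-decrease} refers to happens within a single guarded action and is not a separate move in the lattice sense. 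Making this bookkeeping airtight — that the quantity driving the $\prec$-lattice is \textsc{State-Value-GC} and not the raw colour, and that it admits at most $deg(i)+1$ strict decreases per node — is where I would spend the most care, after which the bound follows immediately from Theorem~\ref{theorem:general-convergence-time} and the handshaking lemma.
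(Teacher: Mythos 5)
Your proposal is correct and takes essentially the same route as the paper: the paper also bounds each node's moves by $deg(i)+1$ --- one possible colour increase on the first move (per \Cref{lemma:gc-decrease}) followed by at most $deg(i)$ strict decreases, since the minimum non-conflicting colour never exceeds $deg(i)+1$ --- and then sums over nodes via the handshaking identity $\sum_{i\in V(G)} deg(i)=2m$ to obtain $n+2m$. Your routing of this count through \textsc{State-Value-GC} and a per-node refinement of \Cref{theorem:general-convergence-time} is just a more formal dressing of the paper's direct colour-counting argument, and your closing caveat (that a move is a decrease in state value even when it increases the raw colour) is exactly the bookkeeping the paper relies on.
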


\begin{proof}
    This can be reasoned as follows: first, a node may increase its colour, once, to resolve its colour conflict with a neighbouring node. Then it will decrease its colour, whenever it moves. Depending on the colour value of its neighbours and when they decide to move, a node $i$ can decrease its colour almost $deg(i)$ times.
\end{proof}

\begin{corollary}
    (From \Cref{theorem:vc-ll} and \Cref{theorem:general-convergence-time}) \Cref{algorithm:vc-ll} converges in $n$ moves.
\end{corollary}

\begin{corollary}
    (From \Cref{theorem:is-ll} and \Cref{theorem:general-convergence-time}) \Cref{algorithm:is-ll} converges in $n$ moves.
\end{corollary}

\section{Related Work}\label{section:literature}

\textbf{Lattice-linearity}: In \cite{Garg2020}, the authors have studied lattice-linear problems which possess a predicate under which the states naturally form a lattice among all states. Problems like the stable marriage problem, job scheduling
and others are studied in \cite{Garg2020}. In \cite{Garg2021} and \cite{Garg2022}, the authors have studied lattice-linearity in, respectively, housing market problem and several dynamic programming problems.

We studied lattice-linear problems that allow self-stabilization in \cite{Gupta2023a}. Specifically, we found that parallelized versions of multiplication and modulo are lattice-linear, and that they allow self-stabilization. In \cite{Gupta2021}, we introduced eventually lattice-linear algorithms. We developed eventually lattice-linear self-stabilizing algorithms for some non-lattice-linear problems, which impose a lattice in a subset of the state space.

In this paper, we introduce fully lattice-linear algorithms and present algorithms for some non-lattice-linear problems. These algorithms induce a lattice among all reachable states.

Garg, in \cite{Garg2020} studied problems in which a distributive lattice is formed among the global states, where a meet and join can be found for any given pair of states, and meet and join distribute over each-other. However, we, in this paper, find that to allow asynchrony, a more relaxed data structure can be allowed. Specifically, in a $\prec$-lattice, for a pair of global states, their join can be found, however, their meet may not be found. As an illustration, in the instance that we study in \Cref{figure:full-lattices-from-ds-example}, both meet and join can be found for a pair of global states in a $\prec$-lattice, however, in the instance that we study in \Cref{figure:vc-semilattice-example}, a join can be found for a pair of global states in a $\prec$-lattice but a meet is not always found.

\noindent\textbf{Dominating set}: Self-stabilizing algorithms for the minimal dominating set problem have been proposed in several works in the literature, for example, in \cite{Xu2003,GODDARD2008,Chiu2014}. Apart from these, the algorithm in \cite{Hedetniemi2003} converges in $O(n^2)$ moves, and the algorithm in \cite{Turau2007} converges in $9n$ moves under an unfair distributed scheduler. The best convergence time among these works is $4n$ moves. The eventually lattice-linear algorithm presented in \cite{Gupta2021} for a more generalized version, i.e., the service demand based \mds problem, takes $2n$ moves to converge.

In this paper, the fully lattice-linear algorithm that we present converges in $n$ moves and is fully tolerant to asynchrony. This is an improvement over the algorithms presented in the literature.

\noindent\textbf{Colouring}: Self-stabilizing algorithms for graph colouring have been presented in several works, including  \cite{Bhartia2016,Checco2017,Duffy2013,Duffy2008,Galan2017,Leith2006,Motskin2009,Chakrabarty2020,Gupta2021}. The best convergence time among these algorithms is $n\times \Delta$ moves, where $\Delta$ is the maximum degree of the input graph.

The fully lattice-linear algorithm for graph colouring that we present converges in $n+2m$ moves and is fully tolerant to asynchrony.

\noindent \textbf{Vertex Cover}: Self-stabilizing algorithms for the vertex cover problem
has been studied in Astrand and Suomela (2010) \cite{Astrand2010}, that converges in $O(\Delta)$ rounds, and Turau (2010) \cite{Turau2010}, that converges in $O(\min\{n, \Delta^2, \Delta \log_3 n\})$ rounds. The eventually lattice-linear algorithm that we presented in \cite{Gupta2021} converges in $2n$ moves in asynchrony.

The algorithm present in this paper is self-stabilizing, converges in $n$ moves, and is fully tolerant to asynchrony.

\noindent \textbf{Independent Set}: Self-stabilizing algorithm for maximal independent set has been presented in \cite{Turau2007}, that converges in $\max\{3n-5, 2n\}$ moves under an unfair distributed scheduler, \cite{GODDARD2008}, that converges in $n$ rounds under a distributed or synchronous scheduler, \cite{Hedetniemi2003}, that converges in $2n$ moves. The eventually lattice-linear algorithm that we presented in \cite{Gupta2021} converges in $2n$ moves in asynchrony.

The algorithm present in this paper is self-stabilizing, converges in $n$ moves, and is fully tolerant to asynchrony.

\noindent\textbf{Tie Breaker}: Most distributed algorithms need a tiebreaker to establish the dominance of actions of some nodes over others selectively. The algorithms for dominating set, colouring, vertex cover, independent set and others in the literature have used the node ID as the tiebreaker. In this paper, we, as well, have used node ID as a tiebreaker. In place of node ID, any other tie-breaker (e.g. degree of nodes) can be deployed.

While any algorithm in the problems discussed in this paper would require some tie-breaker, the algorithms in the literature require some synchronization to ensure convergence. On the other hand, the algorithms that we present in this paper do not require any synchronization primitives.

\noindent\textbf{Abstractions in Concurrent Computing}:
Since this paper focuses on asynchronous computations, we also study other abstractions in the context of concurrent systems: non-blocking (lock-free/wait-free), starvation-free and serializability. A comprehensive discussion of these models is present in the Appendix.

\section{Experiments}\label{section:experiments}

In this section, we present the experimental results of convergence times from implementations run on real-time shared memory model. We implement the algorithm for minimal dominating set (\Cref{algorithm:ds-ll}), and compare it to algorithms by Hedetniemi et al. (2003) \cite{Hedetniemi2003} and Turau (2007) \cite{Turau2007}. We also present the runtime of a distance-1 transformation of \Cref{algorithm:ds-ll}. First, we present the transformation of \Cref{algorithm:ds-ll} in the following subsection.

\subsection{Transforming \Cref{algorithm:ds-ll} to distance-1}

In \Cref{algorithm:ds-ll}, we observe that the guards of a node $i$ are distance-4. First, we transform this algorithm to a distance-1 algorithm. To accomplish this, the nodes maintain additional variables, that provide them information about other nodes, as required.
We use additional variables and guards to propagate this information.
Due to the constraint of reading only distance-1 neighbours, the nodes may end up reading old information about the other nodes. However, due to lattice-linearity, such executions stay to be correct.

A straightforward transformation would require each node $i$ to maintain copies of all the variables of its distance-4 neighbours. However, we use only 4 additional variables. Note that the requirement of these four variables is independent of the number of nodes in $Adj_i^4$. 

We use $i[ldom]$ and $i[hdom]$ to assist in propagating the information about the macro $\textsc{Removable-DS}(i)$.
$i[hdom]$ stores the highest ID dominator of $i$: it is a node in $N_i$ of highest ID such that its state is $IN$. $i[ldom]$ stores the lowest ID dominator of $i$: it is a node in $N_i$ of lowest ID such that its state is $IN$. 
(If such a node does not exist, these variables are set to $\top$ (\textit{null}).)

Once Removable-DS is transformed to a distance-1 macro, unsatisfied-ds will also be a distance-1 macro, as $\textsc{Addadble-DS}$  is already a distance-1 macro. 

We use $i[uflag]$ and $i[hud1]$ to assist in propagating the information about $\textsc{\Imped-II-DS}$.
Node $i$ sets $i[uflag]$ to $true$ to indicate that $i$ is unsatisfied.
$i[hud1]$ stores the highest ID node in distance-1, i.e., in $N_i$, of $i$ that is unsatisfied.


Now, we describe the actions of the transformed distance-1 algorithm. 
We use the following macros. $i$ is \textit{hdom-outdated} iff $i[hdom]$ is not equal to the highest ID dominator of $i$. $i$ is \textit{ldom-outdated} iff $i[ldom]$ is not equal to the lowest ID dominator of $i$. 
$i$ is \textit{removable} iff every node $j \in N_i$ will stay dominated even if $i$ moves out of the dominating set. This will happen if either $j[st]=IN$, or, either $j[hdom]$ or $j[ldom]$ differs from $i$. 
Node $i$ is \textit{addable} if all nodes in $Adj_i$, along with $i$, are out of the \ds. 
$i$ is \textit{unsatisfied} if $i$ is removable or addable. $i$ is \textit{unsatisfied-flag-outdated} iff $i[uflag]$ is not equivalent to $i$ being unsatisfied. $i$ is \textit{hud1-outdated} iff $i[hud1]$ is not qual to the highest ID node in $N_i$ that is unsatisfied. $i$ is \textit{unsatisfied-\imped} if $i$ is the highest ID node in the distance-2 neighbourhood that is unsatisfied. $i$ is \textit{\imped} iff $i$ is hdom-outdated, ldom-outdated, unsatisfied-flag-outdated, hud1-outdated or unsatisfied-\imped.

\begin{center}
    \begin{tabular}{|l|}
        \hline 
        variables of $i$: $st,ldom,hdom,uflag,hud1$\\
        $\textsc{HDom-Outdated}(i)\equiv i[hdom]\neq \text{arg} \max\{x[id]:$\\ \quad\quad$x\in N_i\land x[st]=IN\}$.\\
        $\textsc{LDom-Outdated}(i)\equiv i[ldom]\neq \text{arg} \min\{x[id]:$\\ \quad\quad$x\in N_i\land x[st]=IN\}$.\\
        $\textsc{Removable-\ds-D}\equiv i[st]=IN\land (\forall j\in N_i:$\\ \quad\quad$((j\neq i\land j[st]=IN)\lor ((j[ldom]\neq i\land$\\ \quad\quad$j[ldom]\neq \top)\lor(j[hdom]\neq i\land$\\ \quad\quad$j[hdom]\neq \top))))$.\\
        $\textsc{Addable-\ds-D}(i) \equiv i[st]=OUT\land$\\ \quad\quad$(\forall j\in Adj_i:j[st]=OUT)$.\\
        $\textsc{Unsatisfied-\ds-D}(i)\equiv \textsc{Removable-DS-D}(i)\lor$\\ \quad\quad$\textsc{Addable-DS-D}(i)$.\\
        $\textsc{Unsatisfied-Flag-Outdated}(i)\equiv$\\ \quad\quad$i[uflag]\not\equiv \textsc{Unsatisfied-\ds-D}(i)$.\\
        $\textsc{HUD1-Outdated}(i)\equiv i[hud1]\neq \text{arg} \max\{x[id]:$\\ \quad\quad$x\in N_i\land x[uflag]=true\}$.\\
        $\textsc{Unsatisfied-\Imped}(i)\equiv i[uflag]\land$\\ \quad\quad $(\forall j\in Adj_i:(j[uflag]\Rightarrow j[id]<i[id])\land$\\ \quad\quad$(j[hud1]\neq \top\Rightarrow j[hud1]<i[id]))$.\\
        $\textsc{\Imped-\ds-D}(i)\equiv \textsc{HDom-Outdated}(i)\lor$\\ \quad\quad$\textsc{LDom-Outdated}(i)\lor$\\ \quad\quad $\textsc{Unsatisfied-Flag-Outdated}(i)\lor$\\ \quad\quad$\textsc{HUD1-Outdated}(i)\lor$\\ \quad\quad $\textsc{Unsatisfied-\Imped}(i)$.\\
        \hline 
    \end{tabular}
\end{center}

We describe the algorithm as follows. If $i$ is hdom-outdated, then it updates $i[hdom]$ with ID of the highest ID node in $N_i$ that is in the dominating set. If $i$ is ldom-outdated, then it updated $i[ldom]$ with ID of the lowest ID node in $N_i$ that is in the dominating set. If $i$ is unsatisfied-flag-outdated, then it updates $i[uflag]$ to correctly denote whether $i$ is unsatisfied. If $i$ is hud1-outdated, then $i$ updates $i[hud1]$ with the ID of the highest ID node in $N_i$ that is unsatisfied. If $i$ is unsatisfied-impedensable, then $i$ toggles $i[st]$.

\begin{algorithm}\label{algorithm:ds-ll-d1}
    Rules for node $i$.
\end{algorithm}
\begin{center}
    $\begin{array}{|l|}
        \hline 
        \textsc{\Imped-\ds-M}(i)\longrightarrow\\
        \begin{cases}
            hdom=\text{arg} \max\{x[id]:x\in N_i\land x[st]=IN\}\\ \quad\quad\quad \text{if $\textsc{HDom-Outdated}(i)$}.\\
            ldom=\text{arg} \min\{x[id]:x\in N_i\land x[st]=IN\}\\ \quad\quad\quad \text{if $\textsc{LDom-Outdated}(i)$}.\\
            uflag\not\equiv \textsc{Unsatisfied-\ds-M}(i)\\ \quad\quad\quad \text{if $\textsc{Unsatisfied-Flag-\Imped}(i)$}.\\
            i[hud1]=\text{arg} \max\{x[id]:\\ \quad x\in N_i\land x[uflag]=true\}\\ \quad\quad\quad \text{if $\textsc{HUD1-Outdated}(i)$}.\\
            i[st]=\lnot i[st]\\ \quad\quad\quad \text{if $\textsc{Unsatisfied-\Imped}(i)$}.\\
            i[uflag]=false\\ \quad\quad\quad \text{unconditionally}.
        \end{cases}~\\
        \hline 
    \end{array}$
\end{center}

Deploying the above algorithm reduces the work complexity of evaluating the guards for a node to $O(\Delta)$, which was originally $O(\Delta^4)$ in \Cref{algorithm:ds-ll}. In principle, all lattice-linear distance-$x$ (where $x>1$) algorithms can be transformed into distance-1 algorithms by 
keeping a copy of all variables in $Adj^x$ \cite{Afek2002}. However, it will increase the space complexity of every node by $|Adj^x|$, without decreasing the time complexity of the evaluation of guards.
By contrast, the above algorithm increases the space complexity by only $O(1)$, while decreasing the time complexity from $O(\Delta^4)$ to $O(\Delta)$.

\subsection{Runtime Comparison}

While we see a significant reduction of the time complexity, of the evaluation of guards by a node, from $O(\Delta^4)$ in \Cref{algorithm:ds-ll} to $O(\Delta)$ in \Cref{algorithm:ds-ll-d1}, it is also worthwhile to compare the convergence time of these algorithms when they are implemented on real-time systems.
In this subsection, we compare the runtime of \Cref{algorithm:ds-ll-d1} with \Cref{algorithm:ds-ll} and other algorithms.

\begin{figure}[ht]
    \centering
    \subfigure[]{
        \includegraphics[width=.44\textwidth]{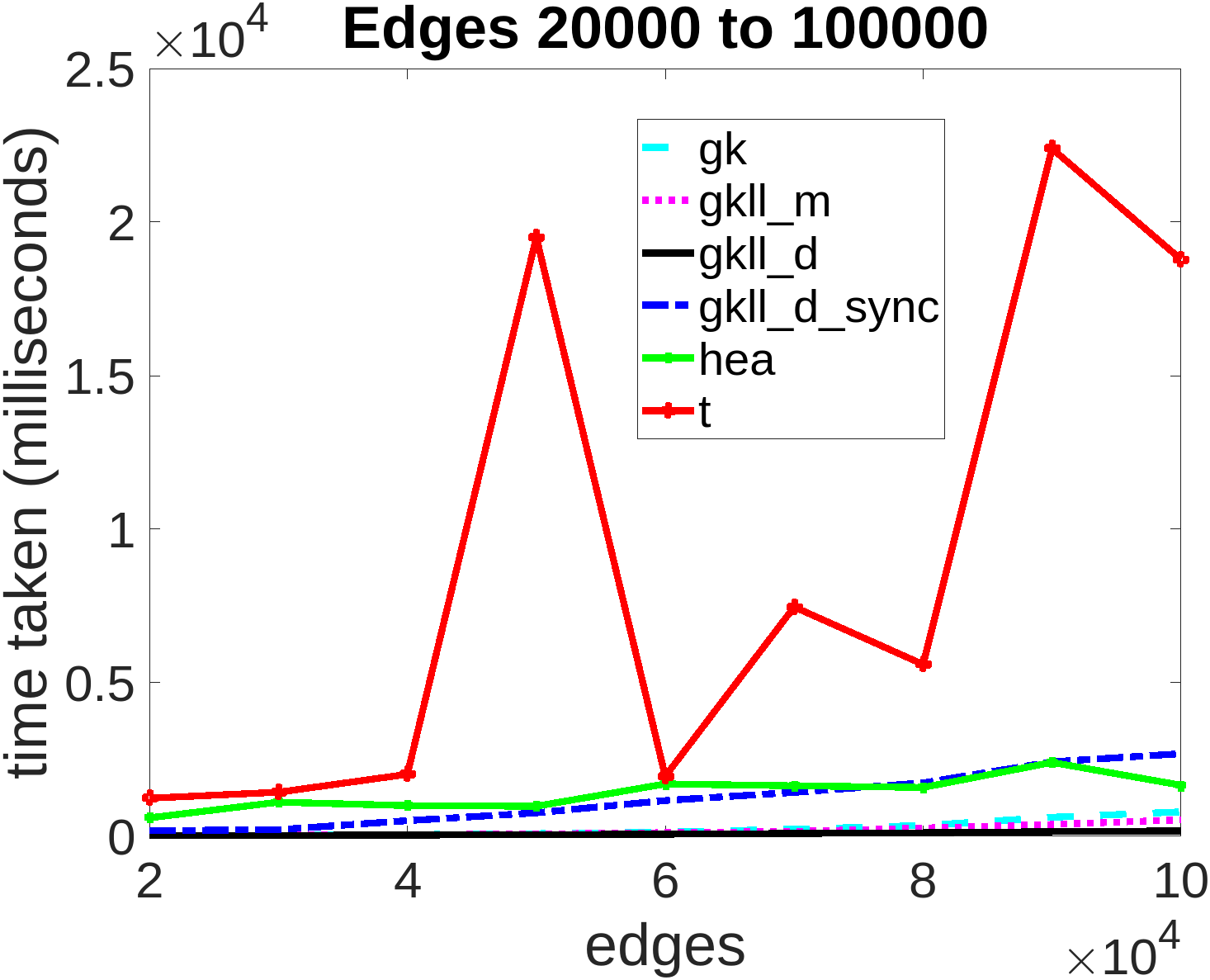}
    }
    \subfigure[]{
        \includegraphics[width=.44\textwidth]{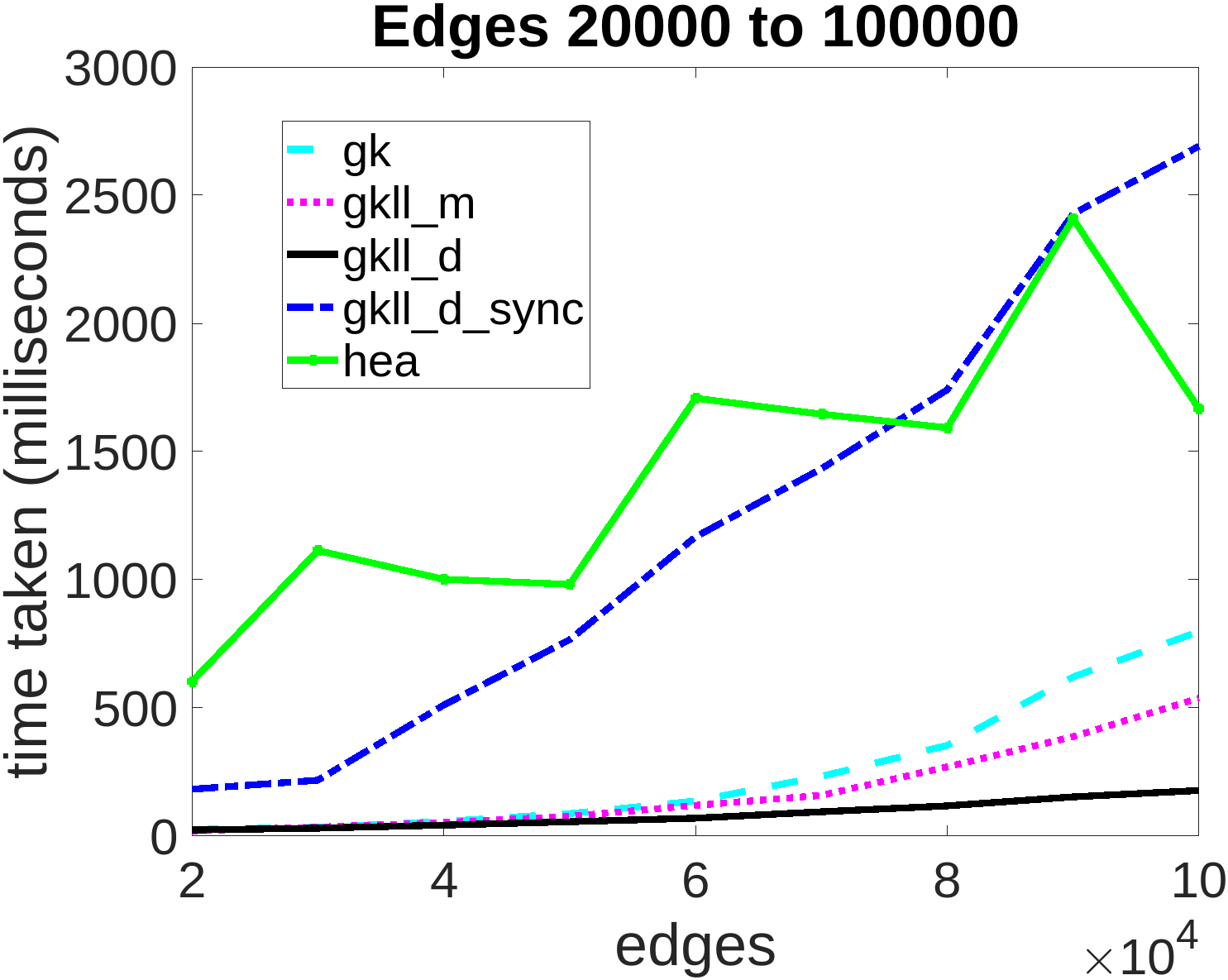}
    }
    \subfigure[]{
        \includegraphics[width=.44\textwidth]{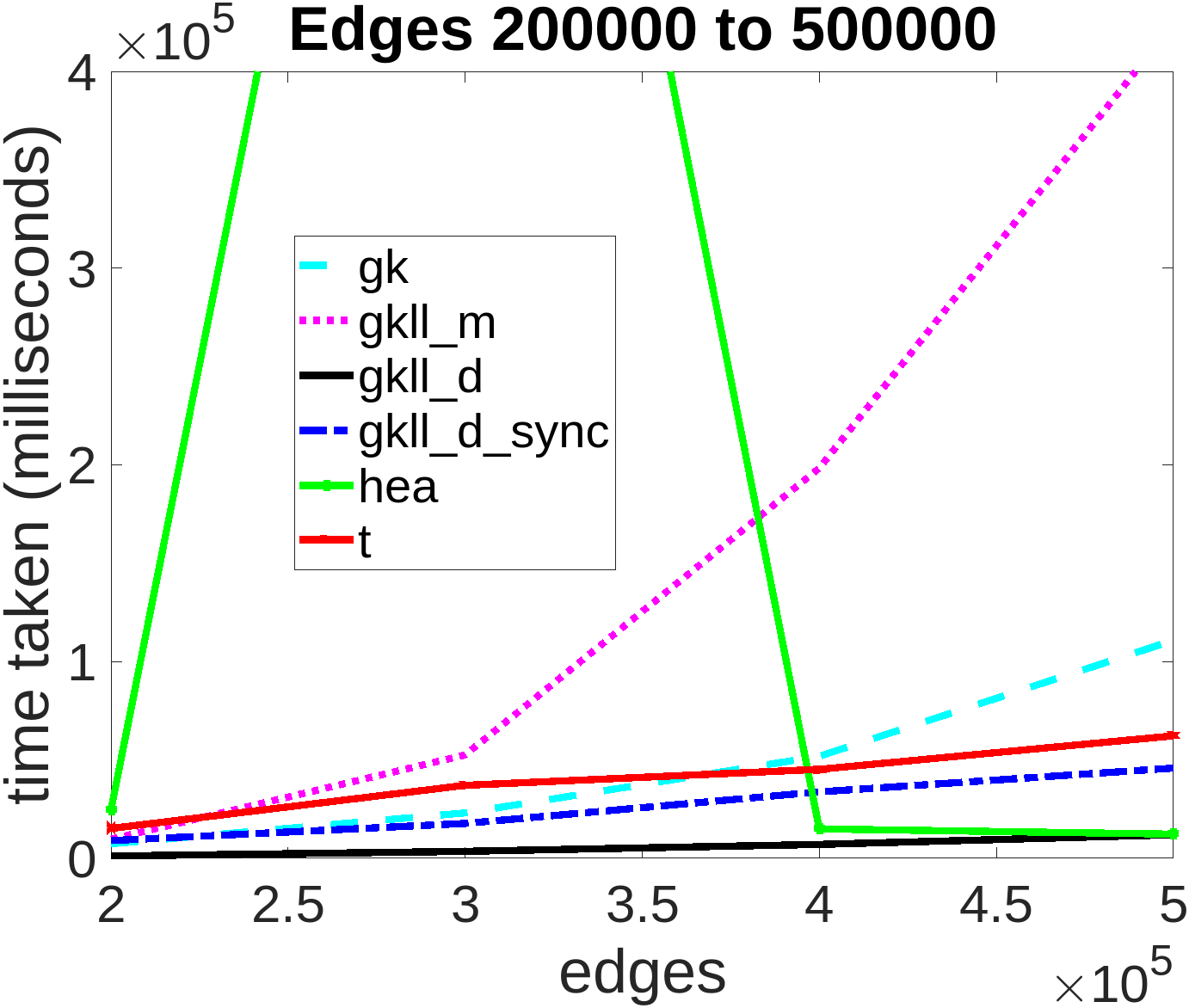}
    }
    \subfigure[]{
        \includegraphics[width=.44\textwidth]{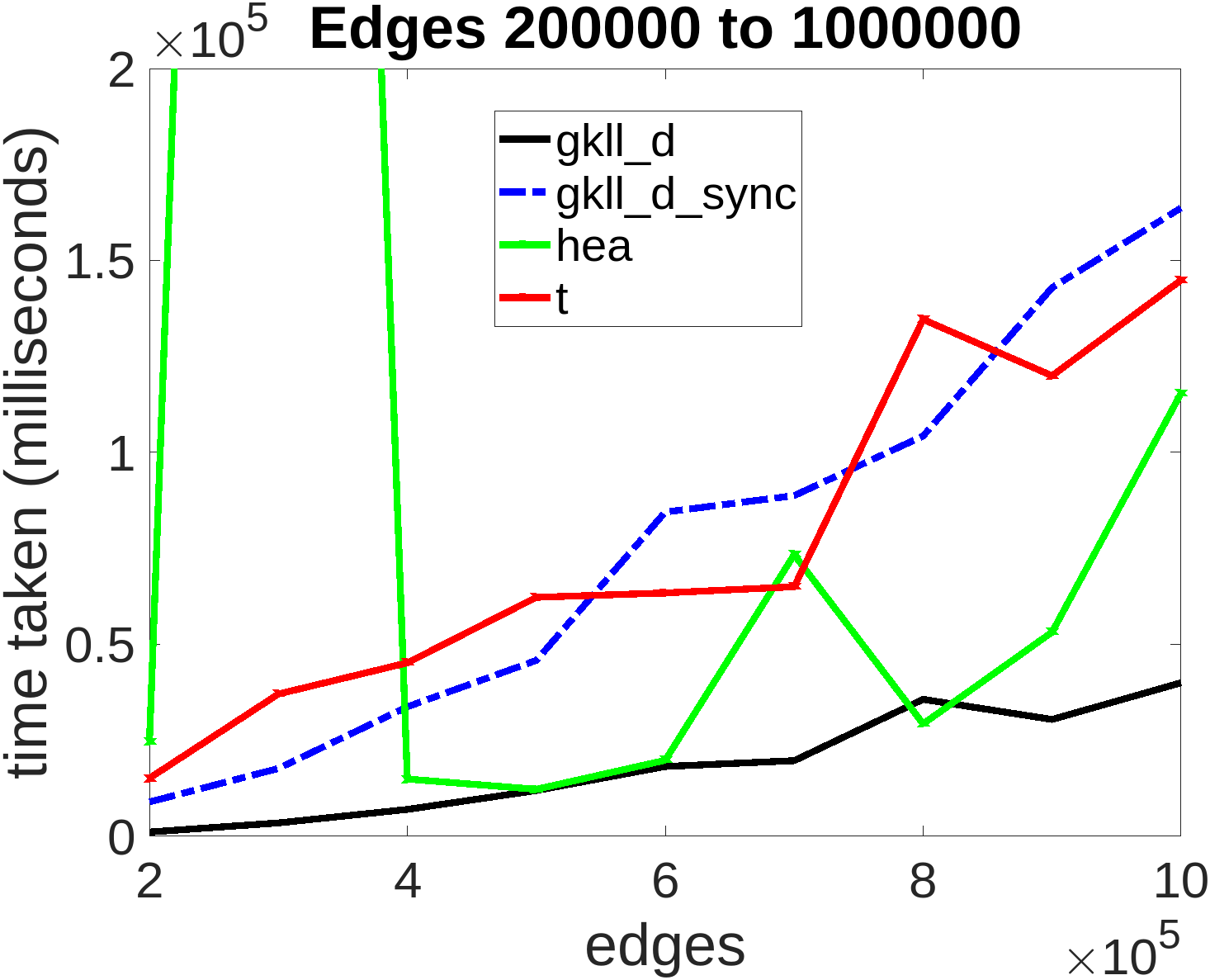}
    }
    \caption{Runtime comparison of \Cref{algorithm:ds-ll-d1}, \Cref{algorithm:ds-ll}, \Cref{algorithm:ds-ellss} and other algorithms for minimal dominating set in the literature. All graphs are of 10,000 nodes.}
    \label{figure:figure}
\end{figure}

We implemented \Cref{algorithm:ds-ll} (\texttt{gkll\_m}), 
\Cref{algorithm:ds-ll-d1} (\texttt{gkll\_d}), lockstep synchronized \Cref{algorithm:ds-ll-d1} (\texttt{gkll\_d\_sync}), \Cref{algorithm:ds-ellss} (\texttt{gk}), algorithms for minimal dominating set present in Hedetniemi et al. (2003) \cite{Hedetniemi2003} (\texttt{hea}) and Turau (2007) \cite{Turau2007} (\texttt{t}), and compare their convergence time. 
The input graphs were random graphs of order 10,000 nodes, generated by the \texttt{networkx} library of python. For comparing the performance results, all algorithms are run on the same set of graphs. 

The experiments are run on Cuda using the \texttt{gcccuda2019b} compiler. \texttt{gkll\_m}, \texttt{gkll\_d} and \texttt{gk} were run asynchronously, and the programs in \texttt{gkll\_d\_sync}, \texttt{hea} and \texttt{t} were run under the required synchronization model.
The experiments are run on \texttt{Intel(R) Xeon(R) Platinum 8260 CPU @} \texttt{2.40} \texttt{GHz, cuda  V100S}. The programs are run using the command \texttt{nvcc $\langle$program$\rangle$.cu}. Here, each multiprocessor ran 256 threads. And, the system provided sufficient multiprocessors so that each node in the graph can have its own thread. All the observations are an average of 3 readings.

\Cref{figure:figure} (a) (respectively, \Cref{figure:figure} (c) and \Cref{figure:figure} (d)) shows a line graph comparision of the convergence time for these algorithms with the number of edges varying from 20,000 to 100,000  (respectively, 200,000 to 500,000 and 200,000 to 1,000,000).
So, the average degree is varying from 4 to 20 (respectively, 40 to 100 and 40 to 200).
\Cref{figure:figure} (b) is same as \Cref{figure:figure} (a), except that the curve for \texttt{hea} is removed so that the other curves can be analyzed closely.
Similarly, \Cref{figure:figure} (c) and \Cref{figure:figure} (d) are similar, however, (1) \Cref{figure:figure} (c) shows curves for convergence time of graphs of average degree 40 to 100, whereas \Cref{figure:figure} (d) shows curves for convergence time of graphs of average degree 40 to 200, and (2) \Cref{figure:figure} (c) contains all 6 curves, whereas \Cref{figure:figure} (d) does not contain curves for \texttt{gkll\_m} and \texttt{gk}.
Observe that
the convergence time taken by the program for \texttt{gkll\_d} is consistently lower than the other algorithms.

In \Cref{figure:figure} (b), it can be observed that the runtime of \texttt{gkll\_m} is lower than the other algorithms (except \texttt{gkll\_d}, which is not surprising). However, in \Cref{figure:figure} (c), it can be observed that the runtime of \texttt{gkll\_m} increases more rapidly and overtakes the runtime of other algorithms (that is why we omitted \texttt{gkll\_m} from \Cref{figure:figure} (d)). This happens mainly because the nodes under \texttt{gkll\_m} are reading values of nodes at distance-4 from themselves. \texttt{gk} also converges comparatively quicker than (\texttt{gkll\_m}) (but not quicker than other algorithms) because its first phase is quicker: the addable nodes move in the dominating set ``carelessly'', whereas in \texttt{gkll\_m} the nodes moving in are ``careful'' as well as the nodes that move out of the dominating set, which adds to the convergence time in the case of \texttt{gkll\_m}.

Next, we discuss how much of the benefit of \texttt{gkll\_d} can be allocated to asynchrony due to the property of lattice-linearity. For this, observe the performance of \texttt{gkll\_d} running in asynchrony (to allow nodes to read old/inconsistent values)  against \texttt{gkll\_d\_sync} (which is the same algorithm as \texttt{gkll\_d} but running in lock-step, to ensure that the nodes only read the most recent values). We observe that the asynchronous implementation has a lower convergence time. This happens mainly because both the asynchronous and the synchronized algorithms have the same convergence time complexity, however, the cost of synchronization (time spent in synchronization, plus the requirement of at least one scheduling thread) is eliminated.

We have performed the experiments on shared memory architecture that allows the nodes to access memory \textit{quickly}. This means that the overhead of synchronization is low. By contrast, if we had implemented these algorithms on a distributed system instead, where computing processors are placed remotely, the cost of synchronization would be even higher. Hence, we anticipate the benefit of lattice-linearity (where synchronization is not needed) to be even higher.

\section{Conclusion}\label{section:conclusion}

In this paper, we 
introduced \textit{fully lattice-linear algorithms} that are tolerant to asynchrony.
Such algorithms induce lattices in the state space even if the underlying problem does not specify, in a suboptimal global state, a set of nodes that must change their states. 



We bridge the gap between lattice-linear problems \cite{Garg2020} and eventually lattice-linear algorithms \cite{Gupta2021}.
%
Fully lattice-linear algorithms overcome the limitations of \cite{Garg2020} and \cite{Gupta2021}. 
Additionally, such algorithms can be developed even for problems that are not lattice-linear. This overcomes a key limitation of \cite{Garg2020} where the system fails if nodes cannot be deemed \imped, or not \imped, naturally.
Since the lattice structures exist in the entire (reachable) state space, we overcome a limitation of \cite{Gupta2021} where only in a subset of global states, lattice-linearity is observed. 

We present algorithms for minimal dominating set (\mds), graph colouring (\gc), minimal vertex cover (\mvc) and maximal independent set (\mis). 
Of these, \mds and \gc relied on tie-breakers on node IDs, a common approach for breaking ties in the literature.  
We observe that a similar design cannot be directly extended to develop an algorithm for  \mvc and \mis. However, the use of complex actions -- that permit a node to change the values of the variables of other node as well as its own -- enable the design of algorithms for \mvc and \mis. We also observe that complex actions can be revised into simple actions -- where a node can only change its own values -- without losing lattice-linearity. However, these revised algorithms utilize the phenomenon of priority inheritance to accomplish this.


In \cite{Garg2020} lattice-linearity is studied in only those systems where the state space forms a distributive lattice where all pairs of global states have a join (supremum) and meet (infimum), and join and meet operations distribute over each other.
We observe that some of these requirements are not required to provide correctness under asynchrony. 
Specifically, we observe that a system allows asynchrony if the state space forms $\prec$-lattices, where the join between any two states is defined, but the definition of meet is not required. This aspect is more overtly observed in instances of \mvc. Specifically, \Cref{figure:vc-semilattice-example} shows that we have a $\prec$-lattice but not a distributive lattice. 

Fully lattice-linear algorithms considered in this paper preserve an advantage of \cite{Garg2020} that was lost in the extension by \cite{Gupta2021}. Specifically, in \cite{Garg2020}, the final configuration could be uniquely determined from the initial state, whereas in \cite{Gupta2021}, all global states (specifically, the infeasible states) do not form a lattice, so starting from an arbitrary state, the state of convergence cannot be predicted. In fully lattice-linear algorithms that we introduce in this paper, the state space is split into multiple lattices and the algorithm starts in one of them. Hence, the state of convergence can be uniquely determined by the initial state.




We have that a lattice-linear algorithm can be transformed to a distance-1 algorithm by having the nodes keep a copy of the variables, of the other nodes, that they want to evaluate their guards with. We transform \Cref{algorithm:ds-ll} to a distance-1 algorithm by using a minimal set of variables needed to evaluate said guards. 

We analyzed the time complexity bounds of an arbitrary algorithm traversing a lattice of states (whether present naturally in the problem or imposed by the algorithm). As corollaries, we obtain the time complexity bounds on the algorithms that we present in this paper. 

We also demonstrate that these algorithms substantially benefit from using the fact that they satisfy the property of lattice-linearity. Specifically, they outperform existing algorithms when they utilize the fact that they are correct without synchronization among processes, i.e., they are correct even if a node is reading old/inconsistent values of its neighbours.




\bibliography{pvll}
\bibliographystyle{acm}

\newpage

\section*{Appendix: Other Abstractions in Concurrent Computing} 

An algorithm is \textit{non-blocking} if in a system running such algorithm, if a node fails or is suspended, then it does not result in failure or suspension of another node. 
A non-blocking algorithm is \textit{lock-free} if system-wide progress can be guaranteed. For example, algorithms for implementing lock-free singly-linked lists and binary search tree are, respectively, presented in \cite{Valois1995} and \cite{Natarajan2014}. In such systems, if a read/write request is blocked then other nodes continue their actions normally.

A non-blocking algorithm is \textit{wait-free} if progress can be guaranteed per node. A wait-free sorting algorithm is studied in \cite{Shavit1997}, which sorts an array of size $N$ using $n\leq N$ computing nodes, and an $O(n)$ time wait-free approximate agreement algorithm is presented in \cite{Attiya1994}. In such systems, in contrast to lock-free systems, it must be guaranteed that all nodes make progress individually.

A key characteristic of lattice linear algorithms is that they  permit the algorithm to execute asynchronously. And, a key difference between non-blocking and asynchronous algorithms is the \textit{system-perspective} for which they are designed.
To understand this, observe that from a perspective, the lattice-linear, asynchronous, algorithms considered in this paper are wait-free.
Each node reads the values of other nodes. Then, it executes an action, if it is enabled, without synchronization. More generally, in an asynchronous algorithm, each node reads the state of its relevant neighbours to check if the guard evaluates to true. It can, then, update its state without coordination with other nodes. 

That said, the goal of asynchronous algorithms is not the progress/blocking of individual nodes 
(e.g., success of insert request in a linked list and a binary search tree, respectively, in \cite{Valois1995} and \cite{Natarajan2014}).
Rather it focuses on the progress from the perspective of the system, i.e., the goal is not about the progress of an action by a node but rather that of the entire system.  
For example, in the algorithm for minimal dominating set present in this paper, if one of the nodes is slow or does not move, the system will not converge. 
However, the nodes can run without any coordination and they can execute on old values, instead of requiring a synchronization primitive to ensure convergence.
In fact, the notion of \imped (recall that in the algorithms that we present in this paper, in any global state, all enabled nodes are \imped) captures this. An \imped node has to make progress in order for the system to make progress.

\textit{Starvation} happens when requests of a higher priority prevent a request of lower priority from entering the critical section indefinitely. To prevent starvation, algorithms are designed such that the priority of pending requests are increased dynamically. Consequently, a low-priority request eventually obtains the highest priority. Such algorithms are called \textit{starvation-free} algorithms. For example, authors of \cite{Kim2005} and \cite{Attiya2010}, respectively, present a starvation-free algorithm to schedule queued traffic in a network switch and a starvation-free distributed directory algorithm for shared objects.
Asynchronous algorithms are starvation-free, as long as all enabled processes can execute. If all enabled processes can execute, convergence is guaranteed.

\textit{Serializability} allows only those executions to be executed concurrently which can be modelled as some permutation of a sequence of those executions. In other words, serializability does not allow nodes to read and execute on old information of each other: only those executions are allowed in concurrency such that reading fresh information, as if the nodes were executing in an interleaving fashion, would give the same result. Serializability is heavily utilized in database systems, and thus, the executions performed in such systems are called \textit{transactions}. Authors of \cite{Papadimitriou1979} show that corresponding to several transactions, determining whether a sequence of read and write operations is serializable is an NP-Complete problem. They also present some polynomial time algorithms that approximate such serializability. Authors of \cite{Fle1982} consider the problem in which the sequence of operations performed by a transaction may be repeated infinitely often. They describe a synchronization algorithm allowing only those schedules that are serializable in the order of commitment.

The asynchronous execution considered in this paper is not \textit{serializable}, especially, since the reads can be from an old global state. Even so, the algorithm converges, and does not suffer from the overhead of synchronization required for serializability.

In \textit{redblue} systems (e.g., \cite{Li2012}), the rules can be divided into two non-empty sets: red rules, which must be synchronized, and blue rules, which can run in a lazy manner and do not have to be synchronized. Lattice-linear and asynchronous systems in general are the systems in which red rules are absent as an enabled node can execute independently without any synchronization.

In \textit{local mutual exclusion}, at a given time, some nodes block other nodes while entering to critical section. This can be done, e.g., by deploying semaphores. Authors of \cite{Keane2001} study the group mutual exclusion problem, where nodes request for various ``sessions'' repeatedly, and it is required that (1) individual processes cannot be in different sessions concurrently, (2) multiple processes can be in the same session concurrently, and (3) is a process tries to enter a session, it is eventually able to do so. Authors of \cite{Khanna2020} propose a leader-based algorithm that deploys local mutual exclusion to solve resource allocation problem in Flying Ad hoc Networks. Authors of \cite{Raymond1989} presented an algorithm for distributed mutual exclusion in computer networks, that uses a spanning tree of the subject network. In this algorithm, the number of messages exchanged per critical section depends on the topology of this tree, typically this value is $O(n)$. Authors of \cite{Yang1995} an algorithm with $O(\lg n)$ time complexity for mutual exclusion among $n$ nodes. Specifically, this algorithm requires atomic reads and writes and in which all spins are local (here a spin means a busy wait in which a node, in this case, waits on locally accessible shared variables).

We see, in algorithms based on local mutual exclusion, that they require additional data structures/variables to ensure that access is provided to (and blocking is deployed on) a certain set of processes. In asynchronous algorithms, nodes do not block each other. In non-lattice-linear problems, we see that usually a tie-breaker is required to ensure the correctness of the executions, however, if a problem is naturally lattice-linear, then this is not required. This is because in the case of non-lattice-linear problems, it may be desired that all unsatisfied nodes do not become enabled, however, in the case of lattice-linear problems, as we see in \cite{Garg2020}, we see that all unsatisfied nodes can be enabled. And, all enabled nodes can read values and perform executions asynchronously, where are allowed to read old values, which is not allowed in algorithms that deploy local mutual exclusion.

\end{document}